\newtheorem{proposition}{Proposition}
\newcommand{\Ex}[1]{\mbox{E}\left[ #1 \right]}
\begin{document}

\title{ \vspace{-2.5cm} Sampling the Bayesian Elastic Net}

\author{Christopher M.~Hans\footnote{Corresponding author: hans@stat.osu.edu}~~and Ningyi Liu\\\vspace{-0.5cm}
{\small Department of Statistics, The Ohio State University, Columbus, OH, 43210, USA}}

\date{\small December 2024}
\maketitle

\abstract{The Bayesian elastic net and its variants have become popular approaches 
to regression in many areas of research. The model is characterized by the prior distribution on the 
regression coefficients, the negative log density of which corresponds to the elastic net penalty function. 
While straightforward Markov chain Monte Carlo (MCMC) methods exist for sampling from the posterior
distribution of the regression coefficients given the penalty parameters, full Bayesian inference---where
the MCMC algorithms are expanded to integrate over uncertainty in the penalty parameters---remains a
challenge. Sampling the penalty parameters (and the regression model error's variance parameter under
some forms of the prior) is complicated by the presence of an intractable
integral expression in the normalizing constant for the prior on the 
regression coefficients. Though sampling methods have been proposed that avoid the need to compute the 
normalizing constant, all correctly-specified methods for updating the remaining parameters that have been 
described in the literature involve at least one ``Metropolis-within-Gibbs'' 
update, requiring specification and tuning of proposal distributions. The computational landscape is
complicated by the fact that two different forms of the Bayesian elastic net prior have been introduced in
the literature, and two different representations (with and without data augmentation) of the prior suggest 
different MCMC algorithms for sampling the regression coefficients. We first provide a comprehensive review
of the forms and representations of the Bayesian elastic net prior, discussing all combinations of these
different treatments of the prior together for the first time and introducing one combination of form
and representation that has yet to appear in the literature. We then introduce MCMC algorithms for
full Bayesian inference for all combinations of prior form and representation. The algorithms allow
for direct sampling of all parameters at low computational cost without any ``Metropolis-within-Gibbs''
steps, avoiding potential problems with slow convergence and mixing due to poor choice of proposal
distribution. The key to the new approach is a careful transformation of the parameter space and an
analysis of the resulting full conditional density functions that allows for efficient rejection sampling 
of the transformed parameters.
We make empirical comparisons between our sampling approaches and other existing MCMC methods 
in the literature for a variety of potential data structures.}

\bigskip
\noindent{\textit{Key Words:} elastic net; lasso; 
MCMC; orthant normal distribution; prior distribution; regression; regularization; rejection sampling; shrinkage}

\newpage

\section{Introduction}\label{sec:intro}
The Bayesian elastic net and its variants have become popular approaches to regression in many areas of 
research. \citet{li:10} and \citet{hans:11} introduced the Bayesian elastic net model in the normal linear 
regression setting, $y = \mathbf{1}\alpha + X\beta + \varepsilon$, where $y$ is an $n \times 1$ response vector,
$X$ is an $n \times p$ matrix of regressors, and $\varepsilon \sim \mbox{N}(0, \sigma^2 I_n)$. In this framework,
the Bayesian elastic net is characterized by the prior on the regression coefficients,
\begin{equation}
	\pi_c(\beta \mid \sigma^2, \lambda_1, \lambda_2) \propto
		\exp\left\{ -\frac{1}{2\sigma^2} \left( \lambda_2 \beta^T \beta + \lambda_1 |\beta|_1
		\right)\right\},   \label{eq:csprior}
\end{equation}
where $|\beta|_1 = \sum_{j=1}^p |\beta_j|$ is the $\ell_1$-norm of $\beta$. 
Under the non-informative prior on the intercept parameter, $\pi(\alpha) \propto 1$, the integrated likelihood 
function is
\begin{equation}
	p(y \mid X, \beta, \sigma^2) = \int p(y \mid X, \alpha, \beta, \sigma^2) p(\alpha) d\alpha =
	(2\pi\sigma^2)^{-(n-1)/2}n^{-1/2}e^{-\frac{1}{2\sigma^2}(y^* - X^*\beta)^T(y^* - X^*\beta)},
	\label{eq:intlik}
\end{equation}
where $y^*$ and $X^*$ are the mean-centered response vector and column-mean-centered matrix of
regressors, respectively. The posterior distribution $\pi_c(\beta \mid y, \sigma^2)$ then satisfies
\begin{equation}
	-2\sigma^2 \log \pi_c(\beta \mid y, \sigma^2) = \mbox{const. } + (y^* - X^* \beta)^T (y^* - X^* \beta) + 
		\lambda_2 \beta^T\beta + \lambda_1 |\beta|_1. \label{eq:logpostEN}
\end{equation}
The non-constant component of (\ref{eq:logpostEN}) is the elastic net objective function
\citep{zou:05} with penalty parameters $\lambda_1$ and $\lambda_2$ so that, for any fixed value of $\sigma^2$, 
the posterior mode of $\beta$ corresponds to an elastic net estimate. 
We use the integrated likelihood (\ref{eq:intlik}) throughout and assume that $y$ and the columns of $X$ have 
been mean-centered, dropping the ``$^*$'' in the notation.

The literature on the connection between Bayesian posterior modes and estimators described as solutions
to penalized optimization problems is quite rich. \citet{tibs:96} made the first such connection for 
lasso regression, the Bayesian side of which was more fully developed by \citet{park:08} and 
\citet{hans:09, hans:10}. Bayesian connections to the adaptive lasso \citep{zou:06} have been considered
by \citet{grif:07, grif:11}, \citet{alha:12}, \citet{leng:14}, \citet{alha:18}, \citet{kang:19}, and \citet{wang:19},
among others. \citet{rock:18} introduced a fully-Bayes, adaptive approach to Bayesian variable selection
that combines the lasso penalty function with the ideas that underly ``spike-and-slab'' priors.
\citet{wang:12} introduced a Bayesian formulation of the graphical lasso \citep{mein:06, yuan:07, frie:08}, 
while \citet{kyun:10} and others have studied connections to the group and fused lassos \citep{yuan:06,  tibs:05}.

Bayesian regression models with connections to the elastic net have also received extensive attention in the literature.
After \citet{zou:05} noted that their elastic net estimator could be viewed as the mode of a Bayesian posterior
distribution, \citet{kyun:10}, \citet{li:10} and \citet{hans:11} sought to fully characterize the corresponding Bayesian
model. \citet{li:10} represented the prior as a scale-mixture of normal distributions, discussed inference when 
$\sigma^2$ was unknown, and introduced Bayesian approaches for selecting the
penalty parameters. \citet{hans:11} also described the scale-mixture of normals representation, introduced an 
additional, ``direct'' representation of the prior, considered full Bayesian inference on the penalty parameters, and 
introduced methods for Bayesian elastic net variable selection and model averaging for prediction. More recently, 
\citet{lee:15} identified an error in \citet{kyun:10}'s representation of the
elastic net prior. \citet{roy:17} corrected the error and also studied optimal selection of 
the penalty parameters. \citet{wang:23} introduced an MCMC algorithm for full Bayesian elastic net inference
that was designed to avoid the need to approximate any integrals in any of the sampling steps.

Two main forms of the Bayesian elastic net prior distribution are common in the literature.
\citet{li:10} and \citet{hans:11} considered the prior as parameterized in (\ref{eq:csprior}), where
the two components of the penalty function, $\lambda_2 \beta^T\beta$ and $\lambda_1 |\beta_1|$,
are both scaled by $2\sigma^2$. We refer to this form of the prior as the 
``commonly-scaled'' parameterization, and we subscript prior and posterior densities under the common scaling 
with ``$c$'' for clarity, e.g., $\pi_c(\beta \mid \sigma^2, \lambda_1, \lambda_2)$ in (\ref{eq:csprior}).
\citet{kyun:10} and \citet{roy:17} scale 
the penalty terms differentially:
\begin{eqnarray}
	\pi_d(\beta \mid \sigma^2, \lambda_1 \lambda_2) \propto \exp\left\{-\frac{\lambda_2}{2\sigma^2}\beta^T\beta
		- \frac{\lambda_1}{\sigma}|\beta|_1\right\}. \label{eq:altscale}
\end{eqnarray}
This version of the prior has the useful property that $\lambda_1$ and $\lambda_2$ do not depend on the units
of the response variable: coupled with a scale-invariant prior on $\sigma$, an analyst would not need to adjust
the fixed values of (or priors for) $\lambda_1$ and $\lambda_2$ in order to obtain the same posterior if the 
response variable was rescaled linearly. We refer to this form of the prior as the ``differentially-scaled'' parameterization
and subscript corresponding prior and posterior densities with ``$d$'' to emphasize the
differential scaling. We omit the subscript when the distinction between the two forms of the prior  is not relevant. 
For fixed values of $\sigma^2$, $\lambda_1$, and $\lambda_2$, the differentially-scaled
prior is simply a reparameterization of the commonly-scaled prior: given the same values of $\sigma^2$
and $\lambda_2$, one can obtain the same posterior distribution under both priors through choice 
of the $\lambda_1$ parameter specific to each prior. It is useful, though, to consider both versions of the
prior separately because the interpretation of $\lambda_1$ is specific to the form of the scaling.

While the papers that study the Bayesian elastic net take different perspectives and have varying objectives,
they have in common the theme that full Bayesian inference under the elastic net prior presents 
computational challenges. The $|\beta|_1$ term in the prior makes direct integration of the posterior
challenging, and so one of the keys to Bayesian elastic net regression modeling is the ability to easily sample
from the posterior. Several MCMC algorithms have been 
proposed to obtain samples from the conditional posterior of $\beta$ given $\sigma^2$, $\lambda_1$ and 
$\lambda_2$. One approach---similar to the one used by \citet{park:08} for the Bayesian lasso---is to 
demarginalize the prior on $\beta$ by introducing latent variables, $\tau^2$, that can be exploited to conduct a 
data augmentation Gibbs sampler. \citet{li:10} and \citet{hans:11} extend this idea to the elastic net penalty 
function and introduce corresponding data augmentation Gibbs samplers under the common scale prior, and 
\citet{roy:17} and \citet{wang:23} consider the data augmentation approach under the differentially-scaled prior.
As an alternative to data augmentation Gibbs sampling, \citet{hans:11} describes an alternative Gibbs sampler
for the Bayesian elastic net that updates each $\beta_j$ one at a time, conditionally on the others, without requiring the 
inclusion of latent variables in the sampling scheme.

The more difficult challenges to computation become apparent when we assign prior distributions to 
$\sigma^2$, $\lambda_1$, and $\lambda_2$ and wish to make inference based on the joint posterior 
distribution. As seen in Section~\ref{sec:review}, the normalizing constant for the Bayesian elastic net prior
(\ref{eq:csprior}) contains the term $\Phi(-\lambda_1/(2\sigma\sqrt{\lambda_2}))^{-p}$, where
$\Phi(\cdot)$ is the standard normal cumulative distribution function (cdf). The same term appears in the
joint prior density for $\beta$ and $\tau^2$ under the data augmentation representation of the prior.
Sampling from or integrating the joint posterior distribution therefore requires dealing with
$\Phi(-\lambda_1/(2\sigma\sqrt{\lambda_1})^{-p}$, an integral expression with no closed form
solution. Noting that the standard normal cdf can evaluated numerically to relatively high precision when its argument
is not too close to $\pm \infty$, \citet{hans:11}
sampled $\sigma^2$, $\lambda_1$, and $\lambda_2$ via random-walk ``Metropolis-within-Gibbs'' updates
for $\log \sigma^2$, $\log \lambda_1$, and $\log \lambda_2$, with their respective full conditional
distributions as the target distributions. While effective, this approach requires specifying step-size
parameters for the random walks, poor choices of which can lead to slow convergence and mixing and
the need to iteratively tune and re-run the MCMC algorithm. \citet{li:10} avoided the issue of sampling
the penalty parameters by devising data-adaptive methods for selecting values for them, and attempted
to sample $\sigma^2$ directly from its full conditional distribution via rejection sampling. Unfortunately,
as shown in Appendix~\ref{app:sig2rejsamp}, their rejection sampling algorithm contains an error and does not
produce samples from the desired target distribution.

The computational situation is slightly improved when the differentially-scaled form of the prior
(\ref{eq:altscale}) is used. As shown in Section~\ref{sec:review}, the awkward term in the normalizing 
constant is then $\Phi(-\lambda_1/\sqrt{\lambda_2})^{-p}$, which no longer depends on $\sigma^2$. 
Posterior sampling of $\sigma^2$ under this form of the prior is straightforward, as demonstrated
by \citet{roy:17}. Full Bayesian inference under priors on $\lambda_1$ and $\lambda_2$, however, still 
must involve methods for handling the 
analytically intractable integral expression. Motivated by the desire to avoid numerical computation of 
$\Phi(\cdot)$, \citet{wang:23} devised a clever exchange algorithm 
\citep{murr:06} that introduces $p$ additional latent variables in such a way as to remove
the term involving $\Phi(\cdot)$ from the joint posterior of the augmented parameter space.
Despite avoiding computation of $\Phi(\cdot)$, the algorithm still requires one parameter
be updated via the Metropolis--Hastings algorithm using a random-walk proposal,
necessitating the selection of a step-size parameter for the random walk.

All of the correctly-specified MCMC approaches described above for fully-Bayes inference under the 
Bayesian elastic net use at least one Metropolis step. From the point of view of a practitioner, it would be
better if the associated selection of random-walk step-size parameters could be avoided entirely.
The computational landscape is further complicated by the fact that the two different forms
(common and differential scaling) and two different representations (with and without data augmentation)
of the prior suggest different approaches to MCMC for full Bayesian inference for Bayesian elastic net 
regression. In Section~\ref{sec:review}, we provide a comprehensive review of the forms and representations
of the Bayesian elastic net prior and the existing approaches to computation. We discuss together for the first 
time all combinations of the different treatments 
of form and representation of the prior, and we introduce one combination of form and treatment that
has yet to be discussed in the literature. We use this review to highlight the computational difficulties
associated with full Bayesian inference. To solve the computational problems, we introduce in
Section~\ref{sec:rejsamp} new MCMC algorithms for full Bayesian inference under all combinations
of prior form and representation. The algorithms allow for direct sampling of all parameters at low
computational cost without using any ``Metropolis-within-Gibbs'' steps, avoiding potential problems
with slow convergence and mixing due to poor choice of proposal distribution. The key to the new
approach is a careful transformation of the parameter space and an analysis of the resulting
full conditional density functions that allows for efficient rejection sampling of the transformed
parameters. We make empirical comparisons in Section~\ref{sec:sim} between our new sampling 
approaches and other existing MCMC methods in the literature for a variety of potential data
structures.

\section{Existing Approaches to Model Specification and Posterior Computation}\label{sec:review}

The Bayesian elastic net prior distribution can be represented directly (without data augmentation) or hierarchically
(with data augmentation). We refer to these two representations of the prior as the ``direct'' and ``DA'' representations,
respectively. Both forms of the prior (common and differential scaling) have a direct and DA representation.
Table~\ref{tab:formrep} indicates where these combinations of form and representation originally appeared in the 
literature. In this section we review the four combinations of representation and form, provide the corresponding posterior
distributions, and discuss existing approaches to posterior computation, highlighting computational challenges.

\begin{table}[h]
\centering
\begin{tabular}{c|c|c|}
\multicolumn{1}{c}{} & \multicolumn{1}{c}{Direct} & \multicolumn{1}{c}{Data Augmentation (DA)} \\ 
\cline{2-3}
 Common Scaling & \citet{hans:11} & \citet{li:10}, \citet{hans:11} \\ 
 \cline{2-3}
 Differential Scaling & * & \citet{kyun:10}$^{**}$, \citet{roy:17} \\
 \cline{2-3}
\end{tabular}
\caption{Citations for original descriptions of the four combinations of form (rows) and representation (columns)
of the Bayesian elastic net prior. The direct representation of the differentially-scaled prior (*) is introduced
in Section~\ref{sec:directprior}. \citet{kyun:10}'s description of the DA representation of the differentially-scaled 
prior (**) contained an error which was corrected by \citet{roy:17}. \label{tab:formrep}}
\end{table}

\subsection{Direct Representation of the Prior}\label{sec:directprior}
\citet{hans:11} introduced the direct characterization of the commonly-scaled elastic net prior (\ref{eq:csprior}).
Handling the term $|\beta|_1 = \sum_{j=1}^p |\beta_j|$ by treating separately the cases $\beta_j < 0$
and $\beta_j \geq 0$, the independent priors on each $\beta_j$ can be expressed as
two separate, symmetric, truncated normal distributions that are weighted to have matching density at the origin.
Combining the cases together, 
\begin{align}
	\pi_c(\beta \mid \sigma^2, \lambda_1, \lambda_2) &= 
		\prod_{j=1}^p \left\{ \frac{1}{2} \cdot \mbox{N}^{-}\left(\beta_j \left| \frac{\lambda_1}{2\lambda_2}\right. ,
			\frac{\sigma^2}{\lambda_2}\right) + 
			\frac{1}{2} \cdot \mbox{N}^{+}\left(\beta_j \left| -\frac{\lambda_1}{2\lambda_2}\right. ,
			\frac{\sigma^2}{\lambda_2}\right) \right\}  \nonumber \\
		&\equiv  \prod_{j=1}^p \left\{ \frac{1}{2} \cdot \frac{\mbox{N}\left(\beta_j \mid \frac{\lambda_1}{2\lambda_2}, 
			\frac{\sigma^2}{\lambda_2}\right)}{\Phi\left(\frac{-\lambda_1}{2\sigma\sqrt{\lambda_2}} \right)}
			\mathbf{1}(\beta_j < 0) \; + \right. \nonumber \\
		&  \hspace{1.5in} \left. \frac{1}{2} \cdot \frac{\mbox{N}\left(\beta_j \mid -\frac{\lambda_1}{2\lambda_2}, 
			\frac{\sigma^2}{\lambda_2}\right)}{\Phi\left(\frac{-\lambda_1}{2\sigma\sqrt{\lambda_2}} \right)}
			\mathbf{1}(\beta_j \geq 0) 
			\right\} \label{eq:truncproddeets}.
\end{align}
The notation $\mbox{N}^-(x \mid m, s^2)$ and 
$\mbox{N}^+(x \mid m, s^2)$ denotes the normalized density functions for, respectively, negatively and non-negatively 
truncated univariate normal distributions, where $m$ and $s^2$ are, respectively, the mean and variance of an 
underlying, non-truncated, normal random variable. The corresponding density functions are:
\[
	\mbox{N}^-\left( x \mid m, s^2\right) \equiv \frac{\mbox{N}\left( x \mid m, s^2\right)}{\Phi(-m/s)}\mathbf{1}(x<0),
	\;\;\;\;
	\mbox{N}^-\left( x \mid m, s^2\right) \equiv \frac{\mbox{N}\left( x \mid m, s^2\right)}{1 - \Phi(-m/s)}\mathbf{1}(x\geq 0),
\]
where $\mbox{N}(x \mid m, s^2) \equiv (2\pi s^2)^{-1/2} \exp\{-(x-m)^2/(2s^2)\} $ is the probability density function for a 
normal distribution with mean $m$ and
variance $s^2$, and $\Phi(x) \equiv \int_{-\infty}^x (2\pi)^{-1/2}e^{-u^2/2}du$ is the standard normal cdf.

Isolating some of the constant terms in (\ref{eq:truncproddeets}) and combining the univariate densities into a multivariate 
density, the prior can also be written as
\begin{equation}
	\pi_c(\beta \mid \sigma^2, \lambda_1, \lambda_2)
		= 2^{-p} \Phi\left(\frac{-\lambda_1}{2\sigma\sqrt{\lambda_2}}\right)^{-p} 
			\sum_{z \in \mathcal{Z}} \mbox{N}\left(\beta \mid -\frac{\lambda_1}{2\lambda_2} z,
			\frac{\sigma^2}{\lambda_2}I_p\right) \mathbf{1}(\beta \in \mathcal{O}_z),
			\label{eq:onform}
\end{equation}
where $\mbox{N}(x \mid m, S)$ is the
density function for a $p$-dimensional multivariate normal distribution with mean vector $m$ and covariance
matrix $S$. The sum is taken over all $2^p$ possible $p$-vectors $z$ having elements $z_j \in \{-1,1\}$,
with $\mathcal{Z}$ being the set of all such vectors. The notation $\mathcal{O}_z$ refers to the
orthant of $\mathbb{R}^p$ where each coordinate is restricted by $z$ to be negative ($z_j = -1$) or 
non-negative ($z_j = 1$). The elastic net prior distribution can therefore be thought of
as a collection of truncated multivariate normal distributions defined separately on the $2^p$ orthants
of $\mathbb{R}^p$. The location vector for the normal distribution in each orthant depends on the orthant,
but the orientations of the normal distributions are the same for all orthants. The specific values of the 
location and orientation parameters ensure that the prior density is continuous, but not differentiable, along the 
coordinate axes. \citet{hans:11} calls this an ``orthant normal distribution'' and writes the density as
\begin{eqnarray}
	\pi_c(\beta \mid \sigma^2, \lambda_1, \lambda_2)
		&=& \sum_{z\in \mathcal{Z}} 2^{-p} \mbox{N}^{[z]}\left( \beta \mid -\frac{\lambda_1}{2\lambda_2} z,
			\frac{\sigma^2}{\lambda_2}I_p\right) \nonumber \\ 
		&\equiv& \sum_{z\in \mathcal{Z}} 2^{-p} \frac{\mbox{N}\left( \beta \mid -\frac{\lambda_1}{2\lambda_2} z,
			\frac{\sigma^2}{\lambda_2}I_p\right)}{\mbox{P}\left(z, \frac{-\lambda_1}{2\lambda_2}z,
			\frac{\sigma^2}{\lambda_2}I_p\right)}\mathbf{1}(\beta \in \mathcal{O}_z), \nonumber
\end{eqnarray}
where $\mbox{N}^{[z]}(\cdot \mid \cdot, \cdot)$ denotes the density function for a multivariate normal distribution 
truncated to orthant $\mathcal{O}_z$, and $\mbox{P}(z, \cdot, \cdot)$ is the probability assigned to that orthant
by the underlying normal distribution. For the commonly-scaled elastic net prior, the orthant probabilities are all equal 
and generate the $\Phi(-\lambda_1/(2\sigma\sqrt{\lambda_2}))^{-p}$ term in (\ref{eq:onform}):
\begin{eqnarray*}
	\mbox{P}\left(z, \frac{-\lambda_1}{2\lambda_2}z, \frac{\sigma^2}{\lambda_2}I_p\right) &=&
		\int_{\mathcal{O}_z} \mbox{N}\left(\beta \mid \frac{-\lambda_1}{2\lambda_2}z, 
			\frac{\sigma^2}{\lambda_2}I_p\right) d\beta \\
		&=& \left[ \prod_{j \; : \; z_j = -1} \int_{-\infty}^0 \mbox{N}\left(\beta_j \mid \frac{\lambda_1}{2\lambda_2},
			\frac{\sigma^2}{\lambda_2}\right) d\beta_j \right] \times
			\left[ \prod_{j \; : \; z_j = 1} \int_{0}^\infty \mbox{N}\left(\beta_j \mid \frac{-\lambda_1}{2\lambda_2},
			\frac{\sigma^2}{\lambda_2}\right) d\beta_j \right] \\
		&=& \left[ \prod_{j \; : \; z_j = -1} \Phi\left( \frac{-\lambda_1}{2\sigma\sqrt{\lambda_2}}\right) \right] \times
			\left[ \prod_{j \; : \; z_j = 1} \left(1- \Phi\left( \frac{\lambda_1}{2\sigma\sqrt{\lambda_2}}\right) \right)
			\right] \\
		&=& \Phi\left( \frac{-\lambda_1}{2\sigma\sqrt{\lambda_2}}\right)^p.
\end{eqnarray*}
Examples of this density function when $p=1$ and $p=2$ are shown in Figure~\ref{fig:enprior}.
We refer to this representation of the prior as the ``direct'' representation under the common scaling.
It is sometimes convenient to work with the prior density by properly normalizing the expression in 
(\ref{eq:csprior}), retaining the $|\beta|_1$ term and avoiding the summation over the orthants:
\begin{equation}
	\pi_c(\beta \mid \sigma^2, \lambda_1, \lambda_2) = 
		2^{-p} (2\pi)^{-p/2}  (\sigma^2/\lambda_2)^{-p/2}
		e^{-\frac{p\lambda_1^2}{8 \sigma^2 \lambda_2}}
		 \Phi\left(-\frac{\lambda_1}{2\sigma\sqrt{\lambda_2}}\right)^{-p}
		\exp\left\{-\frac{\lambda_2}{2\sigma^2} \beta^T \beta - \frac{\lambda_1}{2\sigma^2}|\beta|_1 \right\}.	
		\label{eq:cscalebone}
\end{equation}

The direct representation of the differentially-scaled prior (\ref{eq:altscale}) has not been 
explicitly described in the literature, but it is easy to show that
\begin{eqnarray}
	\pi_d(\beta \mid \sigma^2, \lambda_1 \lambda_2) &=& \prod_{j=1}^p \left\{
		\frac{1}{2}\cdot \mbox{N}^{-}\left(\beta_j \left| \frac{\sigma \lambda_1}{\lambda_2},
		\frac{\sigma^2}{\lambda_2}\right. \right) + \frac{1}{2}\cdot \mbox{N}^{+}\left(\beta_j 
		\left| -\frac{\sigma \lambda_1}{\lambda_2}, \frac{\sigma^2}{\lambda_2} \right.
		\right)\right\} \nonumber \\
	&=& 2^{-p} \Phi\left(-\frac{\lambda_1}{\sqrt{\lambda_2}}\right)^{-p} \sum_{z \in \mathcal{Z}}
		\mbox{N}\left( \beta \left| -\frac{\sigma \lambda_1}{\lambda_2}z, \frac{\sigma^2}{\lambda_2}
		I_p\right.\right) \mathbf{1}(\beta \in \mathcal{O}_z). \label{eq:altdirect}
\end{eqnarray}
The term in the normalizing constant involving $\Phi(\cdot)$ does not
depend on $\sigma^2$ under this scaling of the prior. As in (\ref{eq:cscalebone}), we
can express the properly normalized, differentially-scaled prior in terms of $|\beta|_1$ as:
\begin{equation}
	\pi_d(\beta \mid \sigma^2, \lambda_1, \lambda_2) = 
		2^{-p} (2\pi)^{-p/2}  (\sigma^2/\lambda_2)^{-p/2}
		e^{-\frac{p\lambda_1^2}{2\lambda_2}}
		 \Phi\left(-\frac{\lambda_1}{\sqrt{\lambda_2}}\right)^{-p}
		\exp\left\{-\frac{\lambda_2}{2\sigma^2} \beta^T \beta - \frac{\lambda_1}{\sigma}|\beta|_1 \right\}.	
		\label{eq:dscalebone}
\end{equation}

\begin{figure}[t]
 \begin{center}
 \includegraphics[scale=0.6]{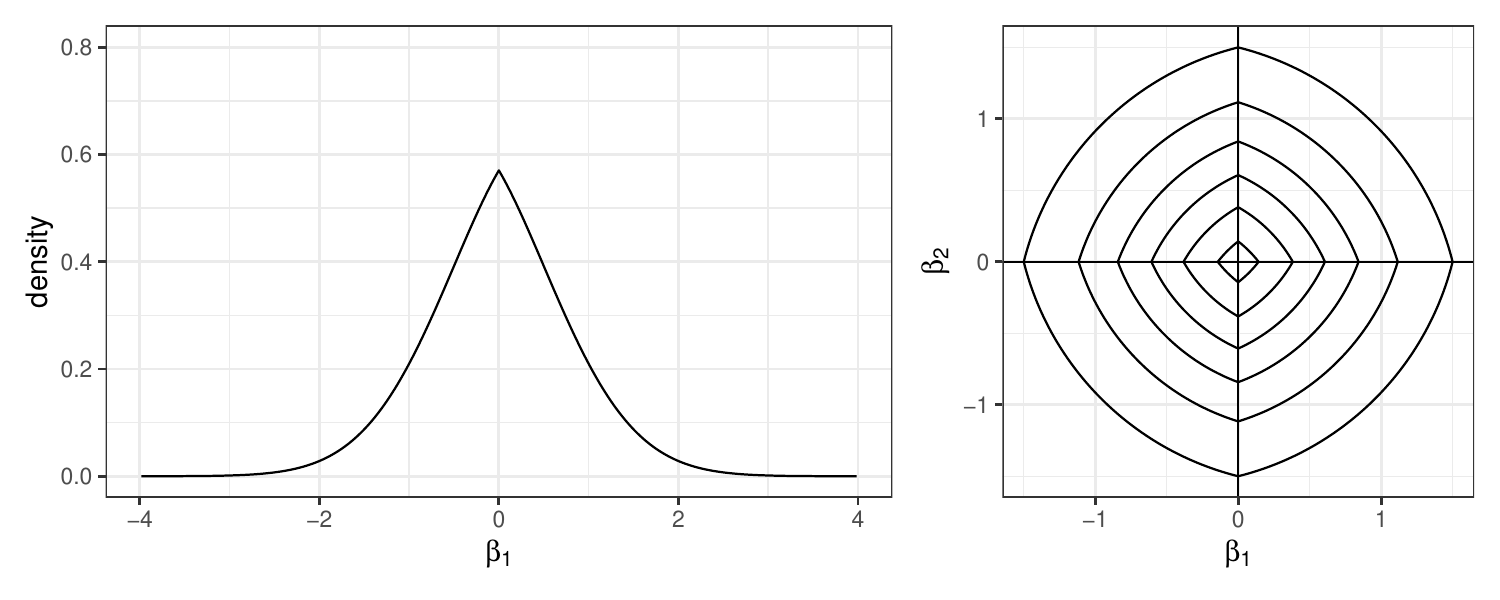}
 \caption{Elastic net prior density $\pi_c(\beta \mid \sigma^2 = 1, \lambda_1 = 1, \lambda_2 = 1)$.
		Left panel: prior density for $\beta_1$ when $p=1$; right panel: contours of
		joint prior density for $\beta_1$ and $\beta_2$ when $p=2$.
 	        \label{fig:enprior}}
 \end{center}
\end{figure}

\subsection{Data Augmentation Representation of the Prior}\label{sec:daprior}

\citet{li:10} and \citet{hans:11} provide an alternate characterization of the commonly-scaled elastic net 
prior distribution by representing it via demarginalization as a scale-mixture of normal distributions 
\citep{andr:74, west:87}. Introducing latent scale variables, $\tau_j^2$, \citet{hans:11} defines the hierarchical 
model
\begin{eqnarray*}
	\beta_j \mid \sigma^2, \lambda_1, \lambda_2, \tau^2_j &\stackrel{\mathrm{ind}}{\sim}&
		\mbox{N}\left(0, \frac{\sigma^2}{\lambda_2}(1-\tau^2_j)\right), \\
	\tau^2_j \mid \sigma^2, \lambda_1, \lambda_2 &\stackrel{\mathrm{iid}}{\sim}&
		\mbox{Inv-Gamma}_{(0,1)}\left( \frac{1}{2}, \frac{1}{2}\left(\frac{\lambda_1}{2\sigma\sqrt{\lambda_2}}\right)^2
		\right),
\end{eqnarray*}
for $j = 1, \ldots, p$, where $\mbox{Inv-Gamma}_{(0,1)}(\cdot, \cdot)$ is an inverse gamma distribution truncated
to the interval $(0,1)$ with density function
\begin{equation}
	\pi_c(\tau^2_j \mid \sigma^2, \lambda_1, \lambda_2) = \frac{1}{2\sqrt{2\pi}}
		\Phi\left(\frac{-\lambda_1}{2\sigma\sqrt{\lambda_2}}\right)^{-1}
		\left(\frac{\lambda_1}{2\sigma\sqrt{\lambda_2}}\right) (\tau_j^2)^{-3/2}
		e^{-\frac{\lambda_1^2}{8\sigma^2 \lambda_2} \tau_j^{-2}}, \;\;
		 0 < \tau_j^2 < 1.
		 \label{eq:directda}
\end{equation}
As shown in \citet{hans:11}, the marginal density of $\beta_j$ under this model is (\ref{eq:cscalebone}), the 
commonly-scaled elastic net prior. \citet{li:10} provide a similar result, but parameterize the scale-mixture 
slightly differently:
\begin{eqnarray*}
	\beta_j \mid \sigma^2, \lambda_1, \lambda_2, \tau^2_j &\stackrel{\mathrm{ind}}{\sim}&
		\mbox{N}\left(0, \frac{\sigma^2}{\lambda_2}\frac{\tau^2_j-1}{\tau^2_j}\right), \\
	\tau^2_j \mid \sigma^2, \lambda_1, \lambda_2 &\stackrel{\mathrm{iid}}{\sim}&
		\mbox{Gamma}_{(1,\infty)}\left( \frac{1}{2}, \frac{1}{2}\left(\frac{\lambda_1}
		{2\sigma\sqrt{\lambda_2}}\right)^2 \right),
\end{eqnarray*}
where the gamma distribution with rate parameter $\lambda_1^2/(8\sigma^2\lambda_2)$ is
truncated to the interval $(1, \infty)$. \citet{li:10} express the normalizing constant in the density 
for the truncated gamma distribution in terms of the upper incomplete gamma function 
\citep{dido:86}, $\Gamma_U(\alpha,x) = \int_x^\infty t^{\alpha-1}e^{-t}dt$. The relevant term
in the normalizing constant of this density is $\Gamma_U\left( \frac{1}{2}, \frac{1}{2}
\left(\frac{\lambda_1}{2\sigma\sqrt{\lambda_2}} \right)^2\right)$, which is equivalent to 
$2\sqrt{\pi}\Phi\left(\frac{-\lambda_1}{2\sigma\sqrt{\lambda_2}}\right)$ due to the identity 
$2\sqrt{\pi}\Phi(-x) = \Gamma_U\left(\frac{1}{2}, \frac{x^2}{2}\right)$ for $x \geq 0$. We refer
to any version of this representation of the prior as the data augmentation (``DA'') or
scale-mixture-of-normals (``SMN'') representation.

\citet{roy:17} introduced the correct DA representation of the differentially-scaled prior (\ref{eq:altscale})
via the hierarchical representation
\begin{eqnarray}
	\beta_j \mid \sigma^2, \lambda_1, \lambda_2, \tau^2_j &\stackrel{\mathrm{ind}}{\sim}& 
		\mbox{N}\left(0, \frac{\sigma^2}{\lambda_2}\left(\frac{\lambda_2 \tau^2_j}{1 + \lambda_2 \tau^2_j}\right)
		\right), \label{eq:correctDA1} \\
	\tau^2_j \mid \lambda_1, \lambda_2 &\stackrel{\mathrm{iid}}{\sim}& \mbox{UH}\left(1, \frac{1}{2},
		\frac{\lambda_1^2}{2}, \lambda_2\right), \label{eq:correctDA2}
\end{eqnarray}
where the $\mbox{UH}(p,r,s,\lambda)$ distribution is a limit of the compound confluent hypergeometric (CCH)
distribution \citep{gord:98}. The density function for this distribution is
\begin{eqnarray}
	\pi_d(\tau^2_j \mid \lambda_1, \lambda_2) = \frac{1}{2\sqrt{2\pi}} \Phi\left(-\frac{\lambda_1}{\sqrt{\lambda_2}}\right)^{-1}
		\lambda_1 \lambda_2^{1/2} e^{-\frac{\lambda_1^2}{2\lambda_2}}
		 (1 + \lambda_2 \tau^2_j)^{-1/2} e^{-\frac{\lambda_1^2
		\tau^2_j}{2}}, \;\;\; \tau_j^2 > 0.
		\label{eq:altda}
\end{eqnarray}
Table~\ref{tab:review} summarizes the different prior 
scalings and data augmentation representations that have appeared in the literature.

\begin{table}[h]
\centering
\scalebox{0.65}{
\begin{tabular}{rcccc}
	& Prior scaling & SMN variance & Mixing distribution & Mixing distribution  \\
	& $\pi(\beta \mid \sigma^2, \lambda_1, \lambda_2) \propto $ 
		& $\beta_j \mid \sigma^2, \lambda_2, \tau_j^2 \sim \mbox{N}(0, \cdot)$ & $\pi(\tau^2_j \mid \sigma^2, \lambda_1, \lambda_2) \propto$ & 
		family
	\\ \hline\hline \\
	\citet{li:10} & $\exp\left\{-\frac{\lambda_2}{2\sigma^2}\beta^T\beta - \frac{\lambda_1}{2\sigma^2}|\beta|_1\right\}$ &
		$\frac{\sigma^2}{\lambda_2}\left(\frac{\tau_j^2-1}{\tau_j^2}\right)$ & 
		$\left(\tau_j^2\right)^{-\frac{1}{2}} e^{-\tau_j^2 \frac{\lambda_1^2}{8\sigma^2\lambda_2}}$, $\tau^2_j \geq 1,$ &
		$\mbox{Gamma}_{(1,\infty)}\left(\frac{1}{2}, \frac{\lambda_1^2}
		{8\sigma^2\lambda_2}\right)$  \\ \\
	\hline \\
	\citet{hans:11} & $\exp\left\{-\frac{\lambda_2}{2\sigma^2}\beta^T\beta - \frac{\lambda_1}{2\sigma^2}|\beta|_1\right\}$
	& $\frac{\sigma^2}{\lambda_2}\left(1 - \tau_j^2\right)$ & $\left(\tau_j^2\right)^{-\frac{3}{2}} e^{-\tau_j^{-2} \frac{\lambda_1^2}
		{8\sigma^2\lambda_2}}$, $0 \leq \tau_j^2 \leq 1,$ &
	$\mbox{Inv-Gamma}_{(0,1)}\left(\frac{1}{2}, \frac{\lambda_1^2}
		{8\sigma^2\lambda_2}\right)$  \\ \\
	\hline\hline \\
	\citet{kyun:10} & 
		$\exp\left\{-\frac{\lambda_2}{2\sigma^2}\beta^T\beta - \frac{\lambda_1}{\sigma}|\beta|_1\right\}$
	& * & * & * \\ \\
	\hline \\
	\citet{roy:17} & $\exp\left\{-\frac{\lambda_2}{2\sigma^2}\beta^T\beta - \frac{\lambda_1}{\sigma}|\beta|_1\right\}$ &
		$\frac{\sigma^2}{\lambda_2}\left(\frac{\lambda_2 \tau_j^2}{1 + \lambda_2 \tau_j^2}\right)$ &
		$\left(1 + \lambda_2 \tau_j^2\right)^{-1/2} e^{-\tau_j^2 \frac{\lambda_1^2}{2}}$, $\tau_j^2 \geq 0$ & $\mbox{UH}\left(1, \frac{1}{2},
			\frac{\lambda_1^2}{2}, \lambda_2\right)$ \\ \\
	\hline\hline \\
\end{tabular}
}
\caption{Prior scaling and data augmentation parameterization in the Bayesian elastic net literature. Double horizontal
	lines differentiate between approaches for scaling the $\ell_1$-norm term in the prior density.
	Entries in the ``SMN variance'' column are the variances of the normal distribution in the scale-mixture-of-normals
	(SMN) representation of the prior; entries in the ``Mixing distribution'' column are the density functions for those random
	variances (with the distributional families named in the final column). ``$\mbox{Gamma}_{(1,\infty)}$'' is a gamma 
	distribution truncated to the interval $(1, \infty)$, ``$\mbox{Inv-Gamma}_{(0,1)}$'' is an inverse gamma distribution truncated 
	to the interval $(0,1)$, and the ``UH'' is distribution is a limit of the compound confluent hypergeometric (CCH) distribution 
	\citep{gord:98}. The SMN representation described in \citet{kyun:10} is inconsistent with the claimed marginal distribution of 
	$\beta$.
	\label{tab:review}}
\end{table}

\subsection{Posterior Computation: Sampling $\beta$}\label{sec:betasamp}
The two main approaches for representing the prior distribution lead to two main approaches for sampling
$\beta$ from its posterior. Under the direct representation of the prior, \citet{hans:11} showed that
the conditional posterior distribution of $\beta$ given $\sigma^2$, $\lambda_1$, and $\lambda_2$ is an 
orthant normal distribution:
\begin{equation}
	\pi(\beta \mid y, \sigma^2, \lambda_1, \lambda_2) = \sum_{z \in \mathcal{Z}} \omega_z
		\mbox{N}^{[z]}\left(\beta \mid \mu_z, \sigma^2 R\right), \label{eq:onpost}
\end{equation}
where the $\omega_z$ are non-negative, orthant-specific weights that sum to one.
The parameters of the underlying normal distributions that generate this posterior have connections
to Bayesian ridge regression \citep{jeff:61, raif:61, hoer:70}. Under both the common and differential 
scalings of the prior, $R = (X^TX + \lambda_2 I_p)^{-1}$ so that $\sigma^2 R$ is the same as the posterior 
covariance matrix for Bayesian ridge regression for a fixed $\lambda_2$.
The orthant-specific location vectors under the commonly-scaled prior are 
$\mu_z = \hat{\beta}_R - \frac{\lambda_1}{2}Rz$, where $\hat{\beta}_R = R X^Ty$ is the 
ridge regression estimate of $\beta$; under the differentially-scaled prior, the location vectors
are $\mu_z = \hat{\beta}_R - \sigma \lambda_1 Rz$ instead. Under both scalings, the orthant-specific weights 
are
$
	\omega_z = \omega^{-1} \frac{\mbox{P}(z, \mu_z, \sigma^2 R)}{\mbox{N}(0 \mid \mu_z, \sigma^2 R)},
$
where $\mbox{P}(z, \mu_z, \sigma^2 R) = \int_{\mathcal{O}_z} \mbox{N}(u \mid \mu_z, \sigma^2 R) du$
and
$
	\omega = \sum_{z \in \mathcal{Z}} \frac{\mbox{P}(z, \mu_z, \sigma^2 R)}{\mbox{N}(0 \mid \mu_z,
	\sigma^2 R)}.
$
A contour plot of the joint posterior density function $\pi_c(\beta \mid y, \sigma^2, \lambda_1, \lambda_2)$
for an example data set when $p=2$ is shown in the left panel of Figure~\ref{fig:postfullcon}. The posterior
is Gaussian within each of the four orthants (quadrants) of $\mathbb{R}^2$, and the density function is 
continuous (bot not differentiable) along the coordinate axes due to the $|\beta|_1$ term in the prior. This
particular example illustrates a situation where the $\lambda_1$ penalty term is large enough that the 
posterior mode lies on one of the coordinate axes ($\beta_2 = 0$).

Sampling directly from (\ref{eq:onpost}) is challenging. An obvious approach is to first sample $z$ from the
discrete distribution over the $2^p$ orthants (each having probability $\omega_z$) and then, conditionally
on the sampled orthant, to sample from a multivariate normal distribution truncated to $\mathcal{O}_z$.
This requires the ability to compute numerically the orthant probabilities $\mbox{P}(z, \mu_z, \sigma^2 R)$ 
and the ability to sample directly from the multivariate truncated normal distribution $\mbox{N}^{[z]}(\beta \mid
\mu_z, \sigma^2 R)$. When $p$ is not too large, the former can sometimes be achieved using, e.g., the 
\texttt{pmvnorm} function in the \texttt{R} package \texttt{mvtnorm} \citep{genz:92, genz:09, genz:24, R},
though if $\lambda_1$ is very large, the probabilities might be too small to compute accurately. The latter,
sampling directly from the truncated multivariate normal distribution, may be difficult even when $p$ is 
small. When $p$ is large, direct sampling from (\ref{eq:onpost}) is not practical.

\citet{hans:11} shows how these issues can be avoided via Gibbs sampling. The full conditional 
posterior distribution for $\beta_j$ is a one-dimensional orthant normal distribution:
\begin{equation}
	\pi(\beta_j \mid y, \beta_{-j}, \sigma^2, \lambda_1, \lambda_2) = 
		(1-\phi_j) \mbox{N}^-(\beta_j \mid \mu_j^-, s^2_j) + 
		\phi_j \mbox{N}^+(\beta_j \mid \mu_j^+, s^2_j).
		\label{eq:bfullcon}
\end{equation}
The scale parameters are $s^2_j = \sigma^2/(x_j^T x_j + \lambda_2)$.
\citet{hans:11} provides an interpretable expression for the location parameter for the positive
component under the commonly-scaled prior (\ref{eq:csprior}):
\begin{equation}
	\mu_j^+ = \hat{\beta}_{R,j} + \left\{ \sum_{i\neq j} \left(\hat{\beta}_{R,i} - \beta_i\right)
		\frac{x_i^T x_j} {x_j^T x_j + \lambda_2}\right\} + \frac{-\lambda_1}{2(x_j^T x_j + \lambda_2)},
		\label{eq:mup}
\end{equation}
where $\hat{R}_{R,i}$ is the $i$th component of the ridge regression estimate $\hat{\beta}_R$ for the 
given value of $\lambda_2$, and $+\lambda_1$ replaces $-\lambda_1$ in the corresponding expression
for $\mu_j^-$. Under the differentially-scaled prior (\ref{eq:altscale}), the trailing term for $\mu_j^{\pm}$ is
$\mp \sigma \lambda_1/(x_j^T x_j + \lambda_2)$ instead.
While (\ref{eq:mup}) has a familiar form---it looks like the usual formula for the 
conditional mean of a Bayesian ridge regression posterior with an additional penalty term
involving $\lambda_1$---a more computationally efficient expression under the commonly-scaled prior 
(\ref{eq:csprior}) is
\[
	\mu_j^+ = \frac{x_j^T y - (x_j^T X_{-j})\beta_{-j} - \lambda_1/2}{x_j^T x_j + \lambda_2},
\]
where $(-\lambda_1/2)$ is replaced with $(+\lambda_1/2)$ in the expression for $\mu_j^-$.
Under the differentially-scaled prior (\ref{eq:altscale}), $\pm \lambda_1/2$ is replaced by 
$\pm \sigma \lambda_1$. The expression is computationally efficient because $X^TX$ and 
$X^Ty$ can be precomputed before the start of the MCMC algorithm. Completing the description of 
the full conditional density, the negative and non-negative components of (\ref{eq:bfullcon}) 
are weighted by
\[
	\phi_j = \left\{ \frac{\Phi(\mu_j^+/s_j)}{\mbox{N}(0\mid \mu_j^+, s^2_j)}\right\} \bigg/ 
		\left\{ 
			\frac{\Phi(\mu_j^+/s_j)}{\mbox{N}(0\mid \mu_j^+, s^2_j)} + 
			\frac{{\Phi(-\mu_j^-/s_j)}}{\mbox{N}(0\mid \mu_j^-, s^2_j)}
		\right\}.
\]

An example full conditional distribution is shown in Figure~\ref{fig:postfullcon}. It is easy to sample
from these distributions. The standard normal cdf, $\Phi(\cdot)$, can be computed to high numerical 
precision (e.g., using the \texttt{pnorm} function in \texttt{R}), and efficient algorithms exist for
sampling from univariate truncated normal distributions 
\citep[e.g., the rejection sampling approach of][]{gewe:91}.

The DA representation of the prior suggests an alternative approach for sampling from the posterior
distribution of $\beta$. \citet{li:10} and \citet{hans:11} describe a two-stage, data augmentation Gibbs
sampler that samples alternately from the conditional posterior of $\beta$ given the latent scale
parameters, $\tau^2$, and then from the conditional posterior of the latent scale
parameters, $\tau^2$, given $\beta$. Under both scalings of the priors, 
the full vector of regression coefficients,
$\beta$, is sampled from the normal distribution $\mbox{N}(\hat{\beta}_{R_*}, \sigma^2 R_*)$,
where $\hat{\beta}_{R_*} = R_* X^T y$. Under the commonly-scaled prior (\ref{eq:csprior}),
$R_*$ has the form $R_c = (X^TX  + \lambda_2 S_{\tau}^{-1})^{-1}$, where 
$S_\tau = \mbox{diag}(1-\tau^2_j)$; under the differentially-scaled prior (\ref{eq:altscale}),
$R_*$ has the form $R_d = (X^TX + D_{\tau}^{-1})^{-1}$, where 
$D_\tau = \mbox{diag}((\tau_j^{-2} + \lambda_2)^{-1})$.
Under the commonly-scaled prior (\ref{eq:csprior}), the latent scale 
parameters are updated by sampling $\zeta_j$ independently from inverse Gaussian distributions with shape 
parameters $\lambda_1^2/(4\lambda_2\sigma^2)$ and means $\lambda_1/(2\lambda_2 |\beta_j|)$,
and then transforming $\tau^2_j = \zeta_j/(1+\zeta_j)$. Under the differentially-scaled prior (\ref{eq:altscale}), 
the latent scale parameters are updated by sampling $\zeta_j$ independently from inverse Gaussian distributions with 
shape parameters $\lambda_1^2$ and means $\sigma \lambda_1/ |\beta_j|$,
and then transforming $\tau^2_j = 1/\zeta_j$.
This sampling scheme has the advantage
that the $\beta_j$ are updated as a block, which may be more effective when the $\beta_j$ are highly
correlated in the posterior. The trade-off for the potential reduction in autocorrelation is the need
to simulate an additional $p$ latent variables.

\begin{figure}[t]
 \begin{center}
 \includegraphics[scale=0.6]{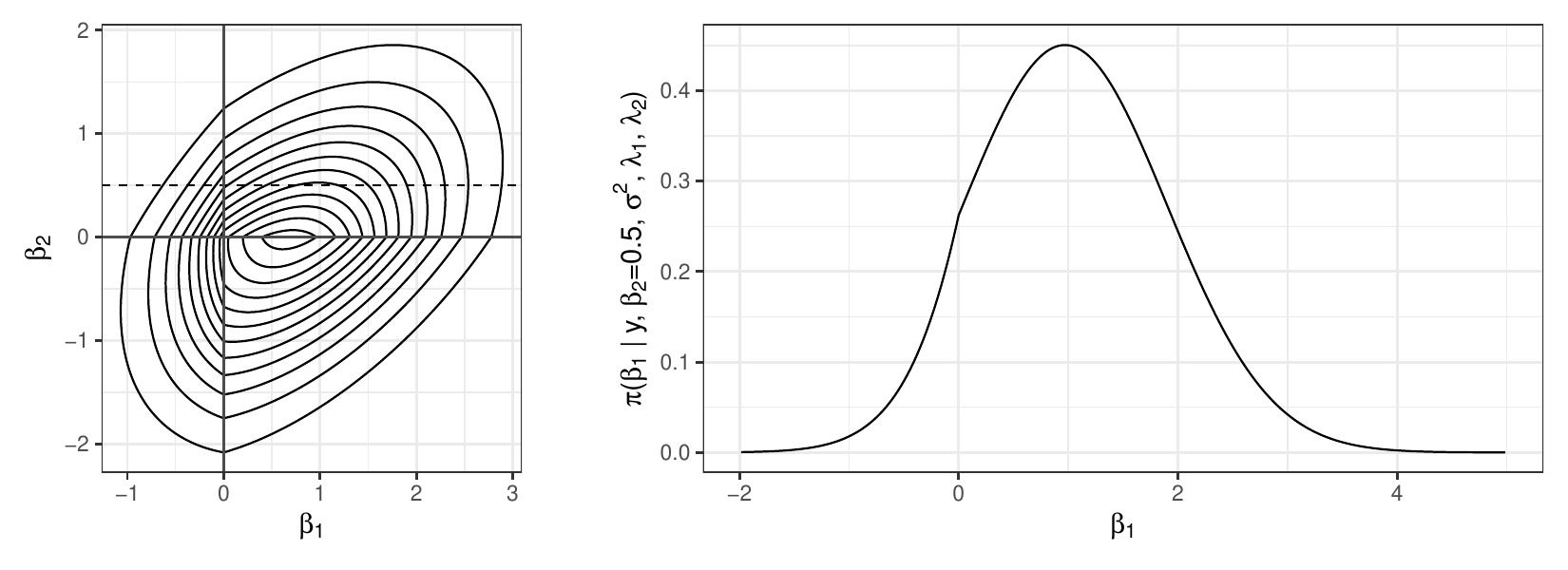}
 \caption{Left: contours of the joint posterior density function 
 	$\pi_c(\beta_1, \beta_2 \mid y, \sigma^2, \lambda_1, \lambda_2)$ for an example data set with
	$p=2$. Right: corresponding full conditional density function $\pi_c(\beta_1 \mid y, \beta_2 = 0.5,
	\sigma^2, \lambda_1, \lambda_2)$ when $\beta_2 = 0.5$ (the dashed line in the plot on the left). 
 	        \label{fig:postfullcon}}
 \end{center}
\end{figure}

\subsection{Posterior Computation: Sampling $\sigma^2$}\label{sec:sig2samp}
A common choice of prior for $\sigma^2$ in Bayesian linear regression is an inverse gamma
distribution, $\mbox{Inv-Gamma}(\nu_a/2,\nu_b/2)$, or its improper limit with $\nu_a = \nu_b = 0$ so that
$\pi(\sigma^2) \propto 1/\sigma^2$.
\citet{kyun:10} \citet{li:10}, \citet{hans:11}, \citet{roy:17}, and \citet{wang:23} all used a version of this
prior in their treatments of the Bayesian elastic net. Under this prior, the form of the full conditional 
distribution for $\sigma^2$ depends on the form and representation for the prior on $\beta$. 
The full conditional has the simplest form under the differentially-scaled prior on $\beta$. As shown in 
\citet{roy:17}, the corresponding full conditional distribution for $\sigma^2$ under the DA representation of 
the differentially-scaled prior on $\beta$ is
\[
	\sigma^2 \mid y, \beta, \tau^2, \lambda_1, \lambda_2 \sim 
		\mbox{Inv-Gamma}\left(\frac{n+p-1+\nu_a}{2}, 
		\frac{\nu_b + (y - X\beta)^T(y - X\beta) + \sum_{j=1}^p \beta_j^2 (\tau_j^{-2} + \lambda_2)}{2}\right),
\]
an inverse gamma distribution that does not depend on $\lambda_1$. 

The full conditional for $\sigma^2$ under the direct representation of the differentially-scaled prior 
has not yet been studied in the literature, but it can be shown that
\[
	\frac{1}{\sigma^2} \mid y, \beta, \tau^2, \lambda_1, \lambda_2 \sim 
		\mbox{MHN}\left(\frac{n+p-1+\nu_a}{2}, 
		\frac{(y - X\beta)^T(y - X\beta) + \lambda_2 \beta^T\beta + \nu_b}{2},
		\lambda_1 |\beta|_1 \right),
\]
a modified half-normal distribution (see Section~\ref{sec:rejsampdeets} for details). Algorithms
for sampling efficiently from this distribution exist \citep[see, e.g.,][]{sun:23}.

The full conditional distribution for $\sigma^2$ under the commonly-scaled prior on $\beta$
is more complex. \citet{hans:11} worked with this full conditional under the direct representation
of the prior, which has density function
\begin{eqnarray*}
	\pi_c(\sigma^2 \mid y, \beta, \lambda_1, \lambda_2) &\propto& 
		\Phi\left(-\frac{\lambda_1}{2\sigma\sqrt{\lambda_2}}\right)^{-p}
		(\sigma^2)^{-(n+p-1+\nu_a)/2 - 1} \times \\
	&& 	\exp\left\{
		-\frac{1}{\sigma^2}\left(
			(y - X\beta)^T(y - X\beta) + \lambda_2 \beta^T\beta + \lambda_1 |\beta|_1
			+ p\lambda_1^2/(4\lambda_2) + \nu_b\right)/2 
			\right\} .
\end{eqnarray*}
This is a non-standard distribution that involves an analytically intractable integral expression,
$\Phi(-\lambda_1/(2\sigma\sqrt{\lambda_2}))$. Noting that as long as $\lambda_1/(2\sigma
\sqrt{\lambda_2})$ is not too large, $\Phi(-\lambda_1/(2\sigma\sqrt{\lambda_2}))$ (or its logarithm) 
can be evaluated numerically to relatively high precision, \citet{hans:11} used a ``Metropolis-within-Gibbs'' 
step to update $\sigma^2$ on its log scale by sampling a proposal, $\log \sigma^{2*}$, from a normal
distribution centered at the current value, $\log \sigma^2$, with a pre-specified innovation variance.
The probability of accepting the proposed value (as opposed to staying at the current value) was
calculated using the ratio of the full conditional for $\log \sigma^2$ evaluated at the proposed and
current values. This algorithm tends to work well in practice, though poor choice of the innovation
variance can result in slow convergence and mixing.

\citet{li:10} and \citet{hans:11} both work with the full conditional for $\sigma^2$ under the DA
representation of the commonly-scaled prior for $\beta$. The full conditional has density function
\begin{eqnarray}
	\pi_c(\sigma^2 \mid y, \beta, \tau^2, \lambda_1, \lambda_2) &\propto& 
		\Phi\left(-\frac{\lambda_1}{2\sigma\sqrt{\lambda_2}}\right)^{-p}
		(\sigma^2)^{-(n+2p-1+\nu_a)/2 - 1} \times \nonumber \\
	&&\exp\left\{
		-\frac{1}{2\sigma^2}\left(\nu_b +
			(y - X\beta)^T(y - X\beta) + 
			\lambda_2\beta^T S_{\tau}^{-1}\beta + 
			\frac{\lambda_1^2}{4\lambda_2}\sum_{j=1}^p \tau_j^{-2}
			\right)
			\right\}. \label{eq:dacssig2}
\end{eqnarray}
\citet{hans:11} used a random-walk Metropolis algorithm for updating $\log \sigma^2$.
\citet{li:10} reexpressed this density in terms of the upper incomplete gamma function
\citep{dido:86}, $\Gamma_U(\alpha,x) = \int_x^\infty t^{\alpha-1}e^{-t}dt$, through the 
equivalence $2\sqrt{\pi}\Phi\left(\frac{-\lambda_1}{2\sigma\sqrt{\lambda_2}}\right) =
\Gamma_U\left( \frac{1}{2}, \frac{1}{2}
\left(\frac{\lambda_1}{2\sigma\sqrt{\lambda_2}} \right)^2\right)$, and proposed
obtaining exact samples from this full conditional distribution via rejecting sampling
using an inverse gamma proposal distribution. Unfortunately, the derivation of the
acceptance probability for the algorithm contains an error and the resulting samples
do not come from the desired target distribution. We identify the problem in detail
in Appendix~\ref{app:sig2rejsamp}.

\subsection{Posterior Computation: Sampling $\lambda_1$ and $\lambda_2$}\label{sec:othersamp}

Posterior sampling of $\lambda_1$ and $\lambda_2$ is non-trivial under both scalings of the prior
on $\beta$ whether or not data augmentation is used. The term 
$\Phi(-\lambda_1/(2\sigma \sqrt{\lambda_2}))^{-p}$ (under prior (\ref{eq:cscalebone})) or
$\Phi(-\lambda_1/\sqrt{\lambda_2})^{-p}$ (under prior (\ref{eq:dscalebone})) appears in the posterior,
making direct sampling of these parameters difficult. \citet{li:10} and \citet{roy:17} eschew full
Bayesian inference and instead propose methods for selecting values for these hyperparameters.
\citet{hans:11} uses separate random-walk Metropolis updates for $\log \lambda_1$ and
$\log \lambda_2$ with their full conditionals as the target densities. This approach requires 
specification of a step-size parameter for the normal proposal, which can be
difficult to select and tune. Motivated by the desire to avoid numerical computation of the
standard normal cdf, $\Phi(\cdot)$, \citet{wang:23} devise a clever exchange algorithm 
\citep{murr:06} that introduces $p$ additional latent variables in such a way as to remove
the term involving $\Phi(\cdot)$ from the joint posterior of the augmented parameter space.
Despite avoiding computation of $\Phi(\cdot)$, the algorithm still requires one parameter
to be updated via the Metropolis--Hastings algorithm using a random-walk proposal,
necessitating the selection of a step-size parameter.

All of the correctly-specified approaches reviewed above for full Bayesian inference for the 
Bayesian elastic net require the specification of at least one step-size parameter in
a Metropolis--Hastings step in a Gibbs sampler. This can be challenging in practice,
as appropriate scales for the step sizes are not always obvious before running the
MCMC algorithm. If poor step sizes are chosen, the resulting Markov chains will
mix slowly. Practitioners who make use of appropriate post-sampling MCMC diagnostics might 
notice this, adjust the step size, and rerun the chain (perhaps iterating this procedure several 
times); practitioners who simply use the original MCMC output as is will produce low-quality 
summaries of the posterior.

We introduce in Section~\ref{sec:rejsamp} a new approach to posterior sampling for full 
Bayesian inference for the Bayesian elastic net that avoids these issues entirely. We use
a simple transformation of the parameter space to (i) reduce the number of 
parameters whose full conditional densities have a term involving $\Phi(\cdot)$ and (ii) produce
log-concave full conditional distributions that can be easily sampled via a highly-efficient
rejection sampling algorithm using automatically-tuned, piece-wise exponential proposal 
distributions. Importantly, the approach requires no tuning on the part of the analyst.

\section{Efficient Rejection Sampling for Full Bayesian Inference}\label{sec:rejsamp}
Full Bayesian inference for Bayesian elastic net regression proceeds by assigning prior
distributions to $\sigma^2$, $\lambda_1$, and $\lambda_2$, and making inferences
based on the joint posterior $\pi(\beta, \sigma^2, \lambda_1, \lambda_2 \mid y)$ and
its margins. \citet{li:10}, \citet{hans:11}, and \citet{roy:17} all assign to $\sigma^2$ an 
inverse gamma prior, $\sigma^2 \sim \mbox{Inv-Gamma}(\nu_a/2, \nu_b/2)$, or the improper
prior with $\pi(\sigma^2) \propto \sigma^{-2}$. Under the commonly-scaled prior for 
$\beta$, \citet{hans:11} considered the
hyperprior distributions $\lambda_1 \sim \mbox{Gamma}(L, \nu_1/2)$ and
$\lambda_2 \sim \mbox{Gamma}(R, \nu_2/2)$, where the gamma distributions are
parameterized to have mean $2L/\nu_1$ and $2R/\nu_2$. When the prior is
parameterized as in (\ref{eq:csprior}), \cite{zou:05} noted that 
$\lambda = \lambda_1+ \lambda_2$ represents the total penalization and 
$\alpha = \lambda_2/(\lambda_1+\lambda_2)$ represents the proportion of the total 
penalization that is attributable to the $\ell_2$-norm component. As noted in \citet{hans:11}, 
when the gamma priors on $\lambda_1$ and $\lambda_2$ are independent and 
$\nu_1 = \nu_2 = \nu$, the induced priors on the transformed parameters are 
$\lambda \sim \mbox{Gamma}(R+L, \nu/2)$ and $\alpha \sim \mbox{Beta}(R, L)$,
with $\lambda$ and $\alpha$ independent. Considering the hyperpriors from these
two perspectives gives the user a range of interpretations to consider when
specifying $L$, $R$, $\nu_1$, and $\nu_2$. We use these prior distributions
for $\sigma^2$, $\lambda_1$, and $\lambda_2$ unless noted otherwise.

\subsection{Rejection sampling for a class of distributions}\label{sec:rejsampdeets}
Our first approach to posterior sampling, described in Section~\ref{sec:trans}, requires
the ability to generate random variates from density functions of the form
\begin{eqnarray}
	f(x) \propto \Phi(-x)^{-q} x^{a-1}e^{-bx^2 - cx - d/x}, \; x > 0,
		\label{eq:f(x)}
\end{eqnarray}
where $q \in \{0, 1, \ldots \}$. The function $f$ is a density function under various
conditions on $q$, $a$, $b$, $c$, and $d$. Three conditions are of special
interest to us.

First, the conditions $\{q = 0, a \in \mathbb{R}, b = 0, c>0, d > 0\}$ correspond
to the family of generalized inverse Gaussian (GIG) distributions \citep{barn:77, barn:78}.
\citet{devr:14} introduced a rejection sampling algorithm for sampling GIG random variates 
$X$ based on log-concavity of the density of $\log X$. 
Second, the conditions $\{q = 0, a > 0, b > 0, c  \in \mathbb{R}, d = 0\}$
correspond to the family of modified half normal (MHN) distributions.
\citet{sun:20} and \citet{sun:23} introduce efficient rejection sampling algorithms for obtaining samples
from this class of distributions.

We can use the rejection sampling algorithms of \citet{devr:14} and \citet{sun:23} to
obtain samples from the GIG and MHN distributions, respectively. When $a \geq 1$,
the GIG and MHN distributions both have log concave densities, in which case we could
also use rejection sampling techniques that exploit log concavity. We describe such an 
approach here that can be used to obtain samples from (\ref{eq:f(x)}) when $q=0$,
$a \geq 1$, and $b$, $c$, and $d$ are such that $f(x)$ is integrable and log concave.
The approach is strongly connected
to the work of \citet{devr:84, devr:86}, \citet{gilk:92a}, and \citet{gilk:92b} in the
sense that a piece-wise exponential hull is used to bound the target density,
with proposals drawn from the corresponding piece-wise exponential
distribution. The approach is not ``adaptive'' in the sense that the piece-wise
exponential hull is not refined if a proposal is rejected, but it is ``adapted''
to the target density because information about the log density's mode and
curvature at the mode are used to construct the proposal density. The approach
was used by \citet{hans:09} to sample $\sigma^2$ for Bayesian lasso regression.

To construct the piece-wise exponential hull, we make use of the unique mode,
$x_*$, of the distribution. For the GIG distribution with $a \geq 1$, the unique mode 
occurs at  $(a - 1 + \sqrt{(a-1)^2 + 4cd})/(ac)$; for the MHN distribution with $a\geq 1$, 
the unique mode occurs at $(-c + \sqrt{c^2 + 8b(a-1)})/(4b)$.
The piece-wise exponential hull is created by 
first placing a knot point at the mode, $x_*$. Several knot points are then placed at 
appropriate distances above and below the mode. To determine where to place these
additional knot points, we make use of the curvature of $\log f$ at is mode, 
$f^{\prime \prime}(x_*) = -(a-1)/x_* - 2b - 2d/x_*^3$, by noting that if $f$ was the
density for a Gaussian distribution,
$s_{x_*} = |f^{\prime\prime}(x_*)|^{-1/2}$ would be the standard deviation of the
distribution and would provide a scale to inform us about where to place the knots. 
Using this second-order Taylor polynomial approximation to $\log f(x)$, we place one knot
at $x_* + s_{x_*}/2$, and then $K$ additional knots at 
$x_* + k s_{x_*}$, $k = 1, \ldots, K$. Below the mode, we place knots at 
$x_* - s_{x_*}/2$ and $x_* - k s_{x_*}$, $k = 1, \ldots, K$. Any negative knots are then
removed from the set; if no knots remain below the mode, a single knot is then
placed at $x_*/2$.
Lines tangent to $\log f$ at the knot points are then used to construct a piece-wise
linear upper hull for $\log f$, with change points occurring at the intersections
of the tangent lines. The piece-wise linear upper hull is exponentiated to obtain
a piece-wise exponential hull for $f$, which can then be rescaled and used as a
proposal distribution for rejection sampling. The key elements of the approach
are depicted graphically in Figure~\ref{fig:rejsamp}. Practical experience suggests that a 
small number of knot points ($K=2$ or $K=3$) results in high acceptance rates with 
low computational overhead.

\begin{figure}[t]
 \begin{center}
 \includegraphics[scale=0.55]{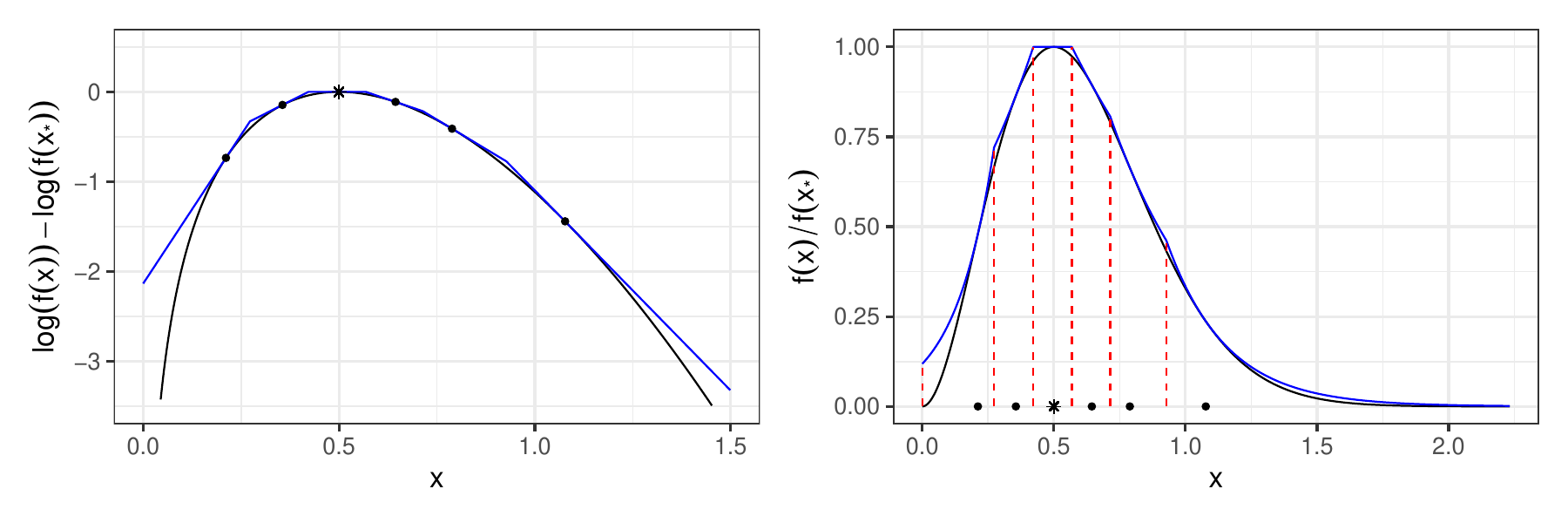}
 \caption{Illustration of the rejection sampling appraoch described in
 		Section~\ref{sec:rejsampdeets} for a target density $f(x)$ with $q=0$,
		$a=3$, $b=2$, $c=2$, and $d=0$ (a modified half normal distribution). 
		The left panel displays the 
		piece-wise linear hull (blue line) for $\log f(x)$ (black line), 
		where both functions are shifted to take the value 0 at the mode,
		$x_* = 0.5$. The right panel displays the piece-wise exponential
		hull (blue line) for $f(x)$ (black line), where both functions are scaled to take
		the value 1 at the mode. The black points indicate the locations of the 
		``knot points'', with  ``*'' corresponding to the mode, $x_*$.
		The vertical red dashed lines partition the support of $f$ according to
		the intersections of the lines tangent to $f$ at the knot points and define
		the change points for the piece-wise exponential proposal distribution.
		The rejection sampler for this example has an acceptance probability of 
		approximately $0.954$.
 	        \label{fig:rejsamp}}
 \end{center}
\end{figure}

The third set of conditions under 
which we will need to sample from $f(x)$ is $\{q \in \{1,2, \ldots\}, a > 0, b \geq q/2, c > 0, d = 0\}$.
This density function is more challenging because it contains the nontrivial term
$\Phi(-x)^{-q}$. Under the more strict condition that $a \geq 1$, we can show
that $f(x)$ is log concave, and a modified version of our rejection sampling
algorithm can be used to obtain samples from $f(x)$.

\begin{proposition}\label{prop:logconcave}
The function $f(x) \propto \Phi(-x)^{-q} x^{a-1}
e^{-bx^2 - c x}$, $x > 0$, is integrable and log concave
when $q \in \{1, 2, \ldots\}$, $ a \geq 1$, $b \geq q/2$, and $c > 0$.
\end{proposition}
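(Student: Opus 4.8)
The plan is to work with $g(x) = \log f(x) = -q \log \Phi(-x) + (a-1)\log x - bx^2 - cx$ (up to an additive constant) on $x > 0$, and to show directly that $g''(x) \le 0$ throughout. Integrability is the easy part: as $x \to 0^+$, $\Phi(-x)^{-q} \to 2^q$ is bounded and $x^{a-1}$ is integrable near $0$ since $a \ge 1 > 0$; as $x \to \infty$, $\Phi(-x)^{-q}$ grows, but only at a rate controlled by Mills' ratio --- $\Phi(-x)^{-1}$ behaves like a constant times $x e^{x^2/2}$ --- so $\Phi(-x)^{-q} \sim C x^q e^{qx^2/2}$, and the factor $e^{-bx^2}$ with $b \ge q/2$ kills this (strictly, when $b > q/2$; the boundary case $b = q/2$ still leaves $e^{-cx}$ with $c>0$ and a polynomial, which is integrable). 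So the substance is log concavity.

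For the second derivative, I would split $g = g_1 + g_2$ where $g_1(x) = -q\log\Phi(-x)$ and $g_2(x) = (a-1)\log x - bx^2 - cx$. The term $g_2$ is easy: $g_2''(x) = -(a-1)/x^2 - 2b \le 0$ since $a \ge 1$ and $b \ge q/2 \ge 1/2 > 0$. The crux is $g_1$. Writing $h(x) = \log \Phi(-x)$, one computes $h'(x) = -\varphi(x)/\Phi(-x) = -r(x)$, where $\varphi$ is the standard normal density and $r(x) = \varphi(x)/\Phi(-x)$ is (the reciprocal of) the Mills ratio --- equivalently, $r(x)$ is the hazard rate of the standard normal evaluated at $x$, i.e. $r(x) = \varphi(-x)/\Phi(-x)$ reflected. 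Then $g_1''(x) = -q h''(x) = q r'(x)$. So log concavity of $f$ reduces to showing $q\, r'(x) \le (a-1)/x^2 + 2b$ for all $x>0$, and since $a \ge 1$ makes the first right-hand term nonnegative, it suffices to show $q\, r'(x) \le 2b$; because $b \ge q/2$, it in turn suffices to prove the clean bound $r'(x) \le 1$ for all $x > 0$.

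The key fact I would invoke or prove is therefore: the standard normal hazard rate $r(x) = \varphi(x)/\Phi(-x)$ satisfies $0 < r'(x) \le 1$ on all of $\mathbb{R}$ (here we only need $x>0$). This is a classical property --- it follows from the identity $r'(x) = r(x)(r(x) - x)$ together with the known bounds on the Mills ratio, e.g. $x < r(x) < x + 1/x$ for $x>0$ (so $r(x)(r(x)-x) < r(x)/x < (x+1/x)/x = 1 + 1/x^2$, which is not quite tight enough, so I would instead use the sharper Sampford-type inequality $r(x) - x \le \text{something}$, or simply cite the well-known result that the Gaussian hazard rate has derivative bounded by $1$). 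Concretely, the cleanest route: $r$ satisfies the Riccati equation $r' = r^2 - xr$, $r'' = (2r - x)r' - r = (2r-x)(r^2 - xr) - r$; one checks $r'(0) = \varphi(0)/\Phi(0) \cdot (\varphi(0)/\Phi(0) - 0) = (2/\pi)\cdot\ldots$, wait --- $r(0) = \varphi(0)/\Phi(0) = (2\pi)^{-1/2}/(1/2) = \sqrt{2/\pi} \approx 0.798$, so $r'(0) = r(0)^2 = 2/\pi \approx 0.6366 < 1$, and $r'(x) \to 1$ as $x \to \infty$ (since $r(x) \sim x + 1/x$). Monotonicity of $r'$ toward its supremum $1$ then gives $r'(x) \le 1$; this monotonicity is the one technical lemma worth stating carefully.

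The main obstacle is thus pinning down the inequality $r'(x) \le 1$ rigorously rather than by asymptotics. I would handle it by showing $r'$ is increasing (equivalently $r'' \ge 0$): from $r'' = (2r-x)r' - r$ and $r' = r(r-x) > 0$, and using $r > x$ (lower Mills bound) so $2r - x > r > 0$, we get $r'' \ge (2r-x)r' - r \ge r \cdot r' - r = r(r' - 1)$; this is circular unless $r' \ge 1$, so instead I would argue from the other direction: suppose $r'(x_0) > 1$ for some $x_0$; using $r \le x + 1/x$ and $r' = r^2 - xr = r(r-x) \le (x+1/x)(1/x) = 1 + 1/x^2$ gives $r'(x) \le 1 + 1/x^2$, so any violation is confined to small $x$, and there one checks directly (e.g. $r'(x) = r(x)(r(x)-x)$ with the explicit bound $r(x) - x \le r(0) = \sqrt{2/\pi}$ for $x \ge 0$, which follows because $r(x) - x$ is decreasing --- its derivative is $r' - 1 = r(r-x) - 1 \le (x+1/x)(1/x) - 1 = 1/x^2 > 0$... again circular). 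The honest resolution is to cite the standard result (see, e.g., references on log-concavity of the Gaussian, or Sampford 1953) that the normal hazard function has derivative in $(0,1]$; with that in hand the proposition follows immediately by the $g = g_1 + g_2$ decomposition above. So in the writeup I would: (1) dispatch integrability via Mills-ratio asymptotics, (2) reduce log concavity to $q r'(x) \le 2b + (a-1)/x^2$, (3) reduce that via $b \ge q/2$, $a \ge 1$ to $r'(x) \le 1$, and (4) establish $r'(x) \le 1$ either self-containedly via the Riccati identity plus monotonicity or by citation.
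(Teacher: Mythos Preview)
Your integrability argument is essentially the paper's: both use a Mills-ratio tail bound on $\Phi(-x)$ and let $e^{-bx^2 - cx}$ with $b \geq q/2$, $c>0$ absorb the growth. For log concavity, though, you take a genuinely different route. You compute $g''$ directly, reduce to $q\,r'(x) \leq 2b + (a-1)/x^2$ for the normal hazard rate $r(x) = \varphi(x)/\Phi(-x)$, and then (via $a\geq 1$, $b\geq q/2$) to Sampford's inequality $r'(x) \leq 1$. This is correct, and the reduction makes the role of the threshold $b \geq q/2$ transparent at the level of second derivatives --- but as you candidly acknowledge, your self-contained attempts at $r'\leq 1$ are circular, and you end up citing Sampford (1953). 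The paper avoids hazard-rate analysis altogether: it observes that $E[e^{-x|Z|_1}] = 2^q e^{qx^2/2}\Phi(-x)^q$ for $Z \sim \mathrm{N}(0,I_q)$, rewrites $f(x) \propto E[e^{-x|Z|_1}]^{-1}\,x^{a-1}e^{-(b-q/2)x^2-cx}$, notes that the second factor is log concave when $a\geq 1$ and $b\geq q/2$, and proves the first factor is log concave by showing $h(x)=E[e^{-x|Z|_1}]$ is log convex via a one-line application of H\"older's inequality to the integrand. Your approach is more hands-on and exposes exactly where each hypothesis bites; the paper's is fully self-contained and trades the delicate Mills-ratio derivative bound for a structural moment-generating-function argument.
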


\begin{proof}
It is clear that $\int_0^t f(x) dx < \infty$ for
all $t < \infty$ because $2^q \leq \Phi(-x)^{-q} < \infty$
for all $0 \leq x \leq t < \infty$. By \citet{fell:68},
$\Phi(-x) \geq (2\pi)^{-1/2}(x^{-1} - x^{-3}) e^{-x^2/2}$,
and so for large enough $x$, $\Phi(-x)^{-q} \leq (2\pi)^{q/2}
x^{3q} e^{qx^2/2}$. For large enough $t$,
\begin{eqnarray*}
	\int_t^\infty \Phi(-x)^{-q} x^{a-1} e^{-bx^2 - c x}
		dx &=& 
		\int_t^\infty \Phi(-x)^{-q} e^{-qx^2/2} x^{a-1} e^{-(b-q/2)x^2 - c x}
		dx \\
	&\leq&  (2\pi)^{q/2} \int_t^\infty  x^{3q+a-1} 
		e^{- (b - q/2)x^2 -c x} dx \\
	&<& \infty,
\end{eqnarray*}
and so $f(x)$ is integrable.

Now let $Z \sim \mbox{N}(0, I_q)$ be a $q$-vector of independent, 
standard normal random variables, and let $|Z|_1 = \sum_{j=1}^q
|Z_j|$. For $x > 0$, it can be shown
that $\Ex{e^{-x |Z|_1}} = 2^q e^{qx^2/2} \Phi(-x)^q$.
We can then write
$
	f(x) \propto \Ex{e^{-x |Z|_1}}^{-1}
		x^{a-1}e^{-(b-q/2)x^2-cx}.
$
The term $x^{a-1}e^{-(b-q/2)x^2-cx}$ is log concave when $a \geq 1$,
$b \geq q/2$, and $c > 0$,
and so we need only show log convexity of $h(x) \equiv \Ex{e^{-x |Z|_1}}$.
For any $\alpha \in [0, 1]$, $x_1 >0$, and $x_2 > 0$,
\begin{eqnarray*}
	h((1-\alpha) x_1 + \alpha x_2) &=& \int_{\mathbb{R}^q}
		e^{-((1-\alpha)x_1 + \alpha x_2)|z|_1} \mbox{N}(z \mid
		0, I_q) dz \\
	&=& \int_{\mathbb{R}^q} \left( 
		e^{-x_1 |z|_1} \mbox{N}(z \mid 0, I_q)
		\right)^{1-\alpha}
		\left(
		e^{-x_2|z|_1} \mbox{N}(z \mid
		0, I_q)\right)^{\alpha} dz \\
	&\leq& \left\{\int_{\mathbb{R}^q} \left( 
		e^{-x_1 |z|_1} \mbox{N}(z \mid 0, I_q)
		\right)dz \right\}^{1-\alpha}
		\left\{\int_{\mathbb{R}^q} \left( 
		e^{-x_2 |z|_1} \mbox{N}(z \mid 0, I_q)
		\right)dz \right\}^{\alpha} \\
	&=& h(x_1)^{1-\alpha}h(x_2)^\alpha,
\end{eqnarray*}
where the inequality is due to H\"older's inequality.
The function $h(x)$ is therefore log convex and hence $f(x)$
is log concave.
\end{proof}

To implement the same rejection sampling algorithm as above, we need to be able to find the
mode of $\log f(x)$, which has derivative
\[
	\frac{d}{dx}\log f(x) = q \frac{\phi(-x)}{\Phi(-x)} + \frac{a-1}{x} - 2bx - c.
\]
The mode is no longer available as the unique, positive root of a polynomial.
A mode-finding algorithm could be used to approximate $x_*$ at some additional computational
cost. Instead, we note that in order to construct an integrable, piece-wise exponential
hull for $f(x)$, we need only identify at least one knot point above the mode because the support
of $x$ is bounded below by zero; additionally identifying at least one point below the mode 
will help improve the quality of the piece-wise exponential approximation.
We can facilitate the choice of such knot points using the following result.

\begin{proposition}\label{prop:modebound}
When $q \in \{1, 2, \ldots \}$, $ a \geq 1$, $b \geq q/2$, and $c > 0$,
the function $f(x) \propto \Phi(-x)^{-q} x^{a-1}
e^{-bx^2 - c x}$, $x > 0$, has a unique mode, $x_*$,
satisfying 
\[
	\begin{array}{cl}
		\frac{a-1}{c} < x_* < \frac{a-1+q}{c}, & \mbox{if } a \geq 1 , q = 2b, \\	
		0 < x_* < \frac{\sqrt{c^2 + 4q(2b-q)} - c}{2(2b-q)}, & 
			\mbox{if } a = 1, q > 2b, \mbox{ and}\\
		\frac{\sqrt{c^2 + 4(a-1)(2b-q)} - c}{2(2b-q)} < x_* <
		\frac{\sqrt{c^2 + 4(a-1+q)(2b-q)} - c}{2(2b-q)}, & \mbox{if } a > 1, q > 2b.
	\end{array} 
\]
\end{proposition}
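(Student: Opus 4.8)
The plan is to analyze $g(x):=\frac{d}{dx}\log f(x)=q\,\phi(-x)/\Phi(-x)+(a-1)/x-2bx-c$ (note $\phi(-x)=\phi(x)$, so this is $q\,r(x)+(a-1)/x-2bx-c$ with $r$ the standard normal hazard rate) and to pin down its unique zero. By Proposition~\ref{prop:logconcave}, $f$ is log concave and integrable; in fact $\log f$ is strictly concave --- the proof of Proposition~\ref{prop:logconcave} writes the awkward factor as $\Phi(-x)^{-q}\propto e^{qx^2/2}\,h(x)^{-1}$ with $h(x)=\Ex{e^{-x|Z|_1}}$ strictly log convex (strict Cauchy--Schwarz) --- so $g$ is strictly decreasing on $(0,\infty)$. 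Together with the boundary behavior of $g$ (it tends to $+\infty$ as $x\to 0^+$ when $a>1$, and it is eventually negative as $x\to\infty$, tending to $-\infty$ if $2b>q$ and to $-c$ if $2b=q$), this produces a unique interior mode $x_*$ determined by $g(x_*)=0$. (For $a=1$ the mode is interior precisely when $c<q\,\phi(0)/\Phi(0)$; I will flag this, since otherwise $f$ is monotone decreasing and has no interior maximizer.) Because $g$ is strictly decreasing through $0$ at $x_*$, to obtain $\ell<x_*<u$ it suffices to verify $g(\ell)>0$ and $g(u)<0$. The crucial point is \emph{not} to substitute $x_*$ into crude inequalities and solve for $x_*$ --- that maneuver would leave a spurious admissible interval coming from the far root of a quadratic --- but instead to evaluate the sign of $g$ at the candidate endpoints.

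The only analytic ingredient needed is the classical Mills-ratio sandwich $x<r(x)<x+1/x$ for $x>0$: the lower bound because $\Phi(-x)=\int_x^\infty\phi(t)\,dt<x^{-1}\int_x^\infty t\,\phi(t)\,dt=x^{-1}\phi(x)$, and the upper bound because $\Phi(-x)>\frac{x}{x^2+1}\phi(x)$ (the difference $\Phi(-x)-\frac{x}{x^2+1}\phi(x)$ is decreasing with limit $0$). Put $\delta:=2b-q\ge0$, so that $-2bx=-qx-\delta x$ and
\[
	g(x)=q\bigl(r(x)-x\bigr)-\delta x+\frac{a-1}{x}-c .
\]
Since $0<r(x)-x<1/x$, this yields
\[
	\frac{a-1}{x}-\delta x-c \;<\; g(x) \;<\; \frac{a-1+q}{x}-\delta x-c ,\qquad x>0 .
\]

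The endpoints in the statement are exactly the positive zeros of these two bounding functions. Clearing denominators, $\frac{a-1}{x}-\delta x-c$ vanishes on $(0,\infty)$ at $\ell=\bigl(\sqrt{c^2+4\delta(a-1)}-c\bigr)/(2\delta)$ when $\delta>0$ and at $\ell=(a-1)/c$ when $\delta=0$, while $\frac{a-1+q}{x}-\delta x-c$ vanishes at $u=\bigl(\sqrt{c^2+4\delta(a-1+q)}-c\bigr)/(2\delta)$, respectively $(a-1+q)/c$. These are well defined with $0\le\ell<u$ for every admissible parameter choice, because $c^2+4\delta K\ge c^2>0$, so no discriminant condition intervenes. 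Consequently $g(\ell)>\frac{a-1}{\ell}-\delta\ell-c=0$ and $g(u)<\frac{a-1+q}{u}-\delta u-c=0$, and strict monotonicity of $g$ gives $\ell<x_*<u$. Re-expressing $\delta=2b-q$ recovers the three displayed cases: $\delta=0$ (that is, $b=q/2$) gives the first line; $\delta>0$ with $a=1$ gives $\ell=0$ (so the lower bound is just the interiority statement $x_*>0$) and the stated $u$; and $\delta>0$ with $a>1$ gives the general two-sided bound. (Here $\delta>0$ means $2b>q$; the condition printed as ``$q>2b$'' is incompatible with the blanket hypothesis $b\ge q/2$ and should read ``$q<2b$''.)

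The main obstacle is conceptual rather than computational: one must resist bounding $x_*$ by direct manipulation of $g(x_*)=0$, since each sandwich inequality, once cleared of fractions, is a quadratic that is negative (respectively positive) on an interval whose far endpoint also satisfies the crude bound; the two-sided conclusion instead comes from monotonicity of $g$ together with sign evaluations at the roots of the bounding functions. Minor additional care is needed for the degenerate case $\delta=0$, where the quadratics collapse to linear equations with roots $(a-1)/c$ and $(a-1+q)/c$, and for the boundary case $a=1$, where interiority of the mode must be confirmed separately.
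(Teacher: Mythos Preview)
Your approach is essentially identical to the paper's: both bound $g(x)=(\log f)'(x)$ via the Mills-ratio sandwich $x<\phi(x)/\Phi(-x)<x+1/x$ to obtain $g_L(x)=(a-1)/x-(2b-q)x-c<g(x)<(a-1+q)/x-(2b-q)x-c=g_U(x)$, read off the positive roots of $g_L$ and $g_U$, and conclude by monotonicity of $g$. Your write-up is in fact more careful than the paper's on two points: you correctly flag that the printed condition ``$q>2b$'' contradicts the hypothesis $b\ge q/2$ and must mean $2b>q$, and you observe that when $a=1$ the limit $\lim_{x\to 0^+}g(x)=q\sqrt{2/\pi}-c$ is finite (the paper erroneously asserts this limit is $+\infty$), so an interior mode exists only when $c<q\sqrt{2/\pi}$.
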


\begin{proof}
The function $f(x)$ has a unique mode because it is log concave, and
the derivative
\[
	\frac{d}{dx}\log f(x) = q \frac{\phi(-x)}{\Phi(-x)} + \frac{a-1}{x} - 2bx - c
\]
is decreasing in $x$ for the same reason. By \citet{gord:41}, we have
\[
	x < \frac{\phi(-x)}{\Phi(-x)} < x + \frac{1}{x},
\]
for $ x > 0$, and so we can bound the derivative of $\log f(x)$ as
\[
	g_L(x) \equiv \frac{a-1}{x} - (2b-q)x - c < \frac{d}{dx} \log f(x) < \frac{a-1+q}{x} - (2b-q)x - c
	\equiv g_U(x).
\]

When $b = q/2$ and $a>1$, $g_L(x)$ and $g_U(x)$ are both
are decreasing functions. The sign of $\frac{d}{dx} \log f(x)$ is therefore positive when
$x < (a-1)/c$ due to $g_L(x)$ and negative when $x > (a-1+q)/c$ due to $g_U(x)$ and so, by the 
intermediate value theorem, $(a-1)/c < x_* < (a-1+q)/c$.

Next, when $b=q/2$ and $a=1$, $g_L(x) = -c < 0$ for all $x > 0$ and so the lower bound is not useful. The upper bound, 
$g_U(x)$, is decreasing, and so $\frac{d}{dx} \log f(x)$ is negative when $x > q/c$. By the intermediate value 
theorem, we must have $0 < x_* < q/c$ because $\lim_{x\rightarrow 0^+} \frac{d}{dx} \log f(x) = \infty$.

Now focusing on the case $b > q/2$, when $a=1$ we have $g_L(x) = -(2b-q)x - c < 0$ for all $x > 0$, 
and so the lower bound is not useful. The upper bound is $g_U(x) = q/x - (2b-q)x - c$, a decreasing function. The sign of
$\frac{d}{dx} \log f(x)$ is therefore negative when $x > \frac{\sqrt{c^2 + 4q(2b-q)} - c}{2(2b-q)}$,
and so by the same arguments as above we must have $0 < x_* < \frac{\sqrt{c^2 + 4q(2b-q)} - c}{2(2b-q)}$.

Finally, when $b>q/2$ and $a>1$, $g_L(x) = (a-1)/x - (2b-q)x - c$ and $g_U(x) = (a-1+q)/x - (2b-q)x - c$, 
and both are decreasing functions. The same arguments as above yields the resulting bounds on $x_*$.

\end{proof}

We can therefore find one point above the mode and, when $a > 1$, one point below the mode that
can be used to construct a piecewise linear upper hull for $\log f(x)$. With only one or two knot
points which might be ill-positioned depending on the quality of the bound(s),
the resulting piece-wise exponential proposal distribution may result in 
a high rejection rate, requiring either additional well-placed knot points to start or a strategy for
adapting the hull as proposals are rejected. Given the difficulty of finding a suitable set of knot
points, we instead use the traditional adaptive rejection sampling algorithm \citet{gilk:92b}
to sample from $f(x)$ when $q > 0$. In practice, we use the \texttt{ars} package \citep{paul:24}
in \texttt{R} \citep{R}, supplying the \texttt{ars} function with inputs $\log f(x)$, $\frac{d}{dx} \log f(x)$,
a lower bound of $x=0$, and one (or two) initial knot points above (and below) the mode.

\subsection{Rejection sampling for full Bayesian inference}\label{sec:trans}
As described in Section~\ref{sec:othersamp}, the full conditional density functions
for $\lambda_1$ and $\lambda_2$ under the differentially-scaled prior are
not available in closed form due to the $\Phi(-\lambda_1/\sqrt{\lambda_2})$
term in (\ref{eq:altdirect}) under the direct representation or in
(\ref{eq:altda}) under the data augmentation representation.
The same is true for $\sigma^2$ under the commonly-scaled prior due to the
$\Phi(-\lambda_1/(2\sigma\sqrt{\lambda_2}))$ term in (\ref{eq:onform})
and (\ref{eq:directda}).  As an improvement to existing MCMC methods for full Bayesian inference,
we consider a transformation of the parameter space that (i) confines the
awkward $\Phi(\cdot)$ term to a single full conditional distribution and (ii) results in
log-concave full conditional density functions for all parameters that do not have
``standard'' (easy to sample from) full conditional distributions. We then exploit 
log-concavity to construct efficient rejection sampling algorithms for these parameters.
The form of the transformation depends on whether the commonly- or
differentially-scaled prior for $\beta$ is used. We start with the commonly-scaled
prior.

\subsubsection{Sampling under the commonly-scaled prior}\label{sec:sampcom}
Under prior (\ref{eq:cscalebone}), define the transformation $(\sigma^2, \lambda_1, 
\lambda_2) \rightarrow (u_1 = \sigma^2, u_2 = \sqrt{\lambda_2}/\sigma,
\theta = \lambda_1/(2\sigma \sqrt{\lambda_2}))$. The reparameterized prior
on the regression coefficients is then
\[
		\pi_c(\beta \mid u_1, u_2, \theta) = 
		2^{-p} (2\pi)^{-p/2}  u_2^p
		e^{-\frac{p\theta^2}{2}}
		 \Phi\left(-\theta \right)^{-p}
		\exp\left\{-u_2^2 \beta^T \beta/2 - u_2 \theta |\beta|_1 \right\}.
\]
The awkward term involving $\Phi(\cdot)$ is now a function of only
$\theta$. Transforming the prior on $\sigma^2$, $\lambda_1$,
and $\lambda_2$ yields
\begin{equation}
	\pi(u_1, u_2, \theta) \propto 
		u_1^{R  + L - \nu_a/2 - 1} u_2^{2R + L - 1} \theta^{L-1}
		\exp\left\{ - u_1 u_2^2 \nu_2/2 - u_1 u_2 \theta \nu_1 - u_1^{-1} \nu_b/2
		\right\}. \label{eq:transprior}
\end{equation}
Combining these priors with the likelihood function yields the following
full conditional posterior distributions for $u_1$, $u_2$,
and $\theta$:
\begin{eqnarray*}
	u_1 \mid y, \beta, u_2, \theta &\sim& \mbox{GIG}\left( R + L - (\nu_a + n - 1)/2,
		u_2^2 \nu_2 + 2u_2 \theta \nu_1, 
		(y - X\beta)^T(y - X\beta) + \nu_b\right), \\
	u_2 \mid y, \beta, u_1, \theta &\sim& \mbox{MHN}\left( 2R + L + p,
		\frac{u_1 \nu_2 + \beta^T\beta}{2}, \theta (u_1 \nu_1 + |\beta|_1)\right), \\
	\pi_c(\theta \mid y, \beta, u_1, u_2) &\propto& 
		\Phi\left(-\theta\right)^{-p}
		\theta^{L-1}
		\exp\left\{-p\theta^2/2 - \theta u_2(u_1 \nu_1 + |\beta|_1 )
		\right\}.
\end{eqnarray*}
We obtain samples from these distributions as follows.

The full conditional posterior distribution for $u_1$ is a generalized inverse
Gaussian (GIG) distribution \citep{barn:77, barn:78}. As discussed in 
Section~\ref{sec:rejsampdeets}, we can use the rejection sampling method
of \citet{devr:14} to sample $u_1$ from its full conditional.
The full conditional for $u_1$ will be log concave when $R + L - (\nu_a + n - 1)/2 \geq 1$,
in which case we could also use the method described in Section~\ref{sec:rejsampdeets}
that uses information about the full conditional at its mode to sample from this
distribution.

The full conditional posterior distribution for $u_2$ is a modified half normal
(MHN) distribution and will always be log concave as we assume $R > 0$ and $L > 0$ 
in the prior. We can therefore use either the rejection sampling method introduced
by \citet{sun:23} or the rejection sampling method described in Section~\ref{sec:rejsampdeets}
that uses information about the full  conditional at its mode to sample from the full
conditional. 

The full conditional for $\theta$ has the form of (\ref{eq:f(x)}) with $q=p \in \{1, 2, \ldots,\}$,
$b = p/2$, and $c > 0$. By Proposition~\ref{prop:logconcave}, the full conditional
will be log concave as long as $L \geq 1$ in the prior on $\lambda_1$, and we can 
use adaptive rejection sampling by identifying at least one knot point to the right of the
distribution's mode via Proposition~\ref{prop:modebound}. 
The full conditional is not log concave when $0 < L < 1$, in which case other methods for
sampling $\theta$ would be required. While this might be considered a limitation to our approach 
to sampling, we note that the rejection sampling method described below under the DA 
representation of the prior requires only that $L > 0$, and so we can always simply use the
DA Gibbs sampler when $0 < L < 1$.

Focusing now on the DA representation of the prior, under the commonly-scaled prior
(\ref{eq:csprior}) and the same transformation $(\sigma^2, \lambda_1, 
\lambda_2) \rightarrow (u_1 = \sigma^2, u_2 = \sqrt{\lambda_2}/\sigma,
\theta = \lambda_1/(2\sigma \sqrt{\lambda_2}))$, the reparameterized joint prior
on $\beta$ and $\tau$ is
\begin{eqnarray}
		\pi_c(\beta, \tau \mid u_1, u_2, \theta) &=& \pi_c(\beta \mid \tau, u_1, u_2, \theta)
			\pi_c(\tau \mid u_1, u_2, \theta) \nonumber \\ 
		&\propto& \Phi\left(-\theta\right)^{-p} \theta^p  u_2^p \left[\prod_{j=1}^p \tau_j^{-3/2}(1-\tau_j)^{-1/2} \right]
			\times \nonumber \\
		&&  \exp\left\{-\frac{u_2^2}{2} \beta^T S_\tau^{-1} \beta -\frac{\theta^2}{2}\sum_{j=1}^p \tau_j^{-1} \right\}. 
			\label{eq:transbcdaprior}
\end{eqnarray}
The awkward term involving $\Phi(\cdot)$ is now a function of only $\theta$. Combining the 
transformed prior (\ref{eq:transprior}) with (\ref{eq:transbcdaprior}) and the likelihood yields
the following full conditional distributions:
\begin{eqnarray*}
	u_1 \mid y, \beta, \tau, u_2, \theta &\sim& \mbox{GIG}\left( R + L - (\nu_a + n - 1)/2, 
		u_2^2 \nu_2 + 2 u_2 \theta \nu_1, (y - X\beta)^T(y - X\beta) + \nu_b \right), \\
	u_2 \mid y, \beta, \tau, u_1, \theta &\sim& \mbox{MHN}\left(2R + L + p, \frac{u_1\nu_2 + \beta^T S_\tau^{-1}\beta}{2},
		u_1\theta \nu_1 \right), \\
	\pi_c(\theta \mid y, \beta, \tau, u_1, u_2) &\propto&
		\Phi\left(-\theta\right)^{-p} \theta^{p+L-1} \exp\left\{-\frac{\theta^2}{2} \left(\sum_{j=1}^p \tau_j^{-1}\right) -
			\theta u_1u_2\nu_1\right\}.
\end{eqnarray*}
As above, we use the method of \citet{devr:14} to sample from the inverse Gaussian full conditional for
$u_1$. The full conditional for $u_2$ will always be log concave and so we use either the method of
\citet{sun:23} or the rejection sampling method described in Section~\ref{sec:rejsampdeets} to update $u_2$.
The full conditional for $\theta$ has the form of (\ref{eq:f(x)}) with $q=p \in \{1, 2, \ldots,\}$,
$a = p + L > 1$, $b = \sum_{j=1}^p \tau_j^{-1}/2$, and $c > 0$. Because $0 < \tau_j < 1$, we have
$b = \sum_{j=1}^p \tau_j^{-1}/2 > p/2$, and by Proposition~\ref{prop:logconcave} the full 
conditional is log concave and we can sample from this distribution using adaptive rejection sampling by identifying 
at least one knot point to the right of the distribution's mode via Proposition~\ref{prop:modebound}.

\subsubsection{Sampling under the differentially-scaled prior}\label{sec:sampdiff}
Under the differentially-scaled prior (\ref{eq:dscalebone}), consider the transformation $(\lambda_2, \lambda_1) \rightarrow
(u_2 = \sqrt{\lambda_2}, \theta = \lambda_1/\sqrt{\lambda_2})$. The reparameterized prior on the regression coefficients
is
\[
	\pi_d(\beta \mid \sigma^2, u_2, \theta) = 2^p (2\pi)^{-p/2} (\sigma^2)^{-p/2} u_2^p e^{-p\theta^2/2} \Phi(-\theta)^{-p}
		\exp\left\{-\frac{u_2^2}{2\sigma^2}\beta^T\beta - \frac{\theta u_2}{\sigma}|\beta|_1\right\},
\]
the prior on $\sigma^2$ remains an inverse gamma distribution (or its improper limit), and the prior on the transformed
parameters is
\begin{equation}
	\pi(u_2, \theta) \propto u_2^{2R + L - 1} \theta^{L-1} e^{-u_2^2 \nu_2/2 - u_2 \theta \nu_1/2}.
	\label{eq:transdprior}
\end{equation}
Combining these priors with the likelihood function yields the following full conditional posterior distributions:
\begin{eqnarray*}
	\frac{1}{\sigma^2} \mid y, \beta, u_2, \theta &\sim& \mbox{MHN}\left( \frac{\nu_a + p + n - 1}{2},
		\frac{(y-X\beta)^T(y-X\beta) + u_2^2 \beta^T\beta + \nu_b}{2}, \theta u_2 |\beta|_1\right), \\
	u_2 \mid y, \beta, \sigma^2, \theta &\sim& \mbox{MHN}\left( 2R + L + p, \frac{\beta^T\beta/\sigma^2
		 + \nu_2}{2}, \theta\left(|\beta|_1/\sigma + \nu_1/2\right) \right), \\
	\pi_d(\theta \mid y, \beta, \sigma^2, u_2) &\propto& \Phi\left(-\theta\right)^{-p} \theta^{L-1}
		\exp\left\{-\theta^2 p/2 - \theta u_2 \left(|\beta|_1/\sigma + \nu_1/2\right) \right\}.
\end{eqnarray*}
The full conditionals for $\sigma^{-2}$ and $u_2$ are log concave, and so we can either use the rejection
sampling methods described in Section~\ref{sec:rejsampdeets} or the method of \citet{sun:23} to obtain
samples from the full conditionals. The full conditional for $\theta$ will be log concave when $L \geq 1$,
in which case we can use rejection sampling as described in Section~\ref{sec:rejsampdeets}. When
$0 < L < 1$, the full conditional is not log concave and we cannot use this particular method of rejection
sampling. As in Section~\ref{sec:sampcom}, when $0 < L < 1$ we can instead implement a DA Gibbs sampler
as described below.

Under the DA representation of the differentially-scaled prior
and the same transformation $(\lambda_2, \lambda_1) \rightarrow (u_2 = \sqrt{\lambda_2},
\theta = \lambda_1/\sqrt{\lambda_2})$, the reparameterized joint prior on $\beta$ and $\tau^2$ is
\begin{eqnarray}
		\pi_d(\beta, \tau^2 \mid \sigma^2, u_2, \theta) &=& \pi_d(\beta \mid \tau^2, \sigma^2, u_2, \theta)
			\pi_d(\tau^2 \mid \sigma^2, u_2, \theta) \nonumber \\ 
		&\propto& (\sigma^2)^{-p/2} \Phi\left(-\theta\right)^{-p} \theta^p u_2^{2p}
			\left[\prod_{j=1}^p (\tau_j^2)^{-1/2}\right] \times \nonumber \\
		&& \exp\left\{ -\frac{\theta^2}{2} \left( p + u_2^2\sum_{j=1}^p \tau_j^2\right) -
			\frac{1}{2\sigma^2} \sum_{j=1}^p \beta_j^2\left(\tau_j^{-2} + u_2^2\right) \right\}.
			\label{eq:transbddaprior}
\end{eqnarray}
Combining the transformed prior (\ref{eq:transdprior}) with (\ref{eq:transbddaprior}) and the likelihood yields
the following full conditional posterior distributions, which can be sampled as described above:
\begin{eqnarray*}
	\sigma^2 \mid y, \beta, \tau^2, u_2, \theta &\sim& \mbox{Inv-Gamma}\left(\frac{p+\nu_a+n-1}{2},
		\frac{\nu_b + (y - X\beta)^T(y - X\beta) + \sum_{j=1}^p \beta_j^2 (\tau_j^{-2} + u_2^2)}{2} \right), \\
	u_2 \mid y, \beta, \tau^2, \sigma^2, \theta &\sim& \mbox{MHN}\left(2p + 2R + L, 
		\frac{\beta^T\beta/\sigma^2 + \nu_2 + \theta^2 \sum_{j=1}^p \tau_j^2}{2}, \frac{\theta\nu_1}{2}\right), \\
	\pi_d(\theta \mid y, \beta, \tau^2, \sigma^2, u_2) &\propto&
		\Phi\left(-\theta\right)^{-p} \theta^{p+L-1} \exp\left\{-\frac{\theta^2}{2}\left(p + u_2^2\sum_{j=1}^p \tau_j^2\right)
			- \theta u_2\nu_1/2\right\}.
\end{eqnarray*}

\section{Simulations}\label{sec:sim}
We conduct a simulation study to document the relative performance of the various approaches to MCMC 
for full Bayesian elastic net inference under a several scenarios for data generation. The existing, 
correctly-specified approaches to MCMC for full Bayesian modeling under the elastic net are the 
Metropolis--Hastings (MH) approach of \citet{hans:11}, the exchange algorithm (EX) approach of \citet{wang:23}, 
and the transformation and rejection sampling (RS) approach introduced in this paper. Because we are not 
introducing new statistical
models, our comparisons focus on dynamics of the Markov chains generated by the MCMC algorithms.
We use effective sample size (ESS) for model parameters as a measure of efficacy of a given algorithm for a 
given data set. ESS is computed using the R package \texttt{mcmcse} \citep{mcmcse} based on \citet{gong:15}.

We consider the four simulation settings used by \citet{zou:05} in their original study of the elastic net
\citep[see also][]{hans:11}.
In Simulation~1, $n=20$ observations are simulated from the normal linear regression model
$y = X\beta + \varepsilon$ with $\beta = (3, 1.5, 0, 0, 2, 0, 0, 0)^T$. The error term is generated according
to $\varepsilon \sim \mbox{N}(0, \sigma^2 I_n)$ with $\sigma = 3$. Each row of the $n \times p $ design 
matrix $X$ is generated independently from a $\mbox{N}(0, V)$ distribution with covariance matrix $V$, 
where $V_{ij} = 0.5^{|i-j|}$ for $1 \leq i,j \leq p = 8$. Simulation~2 uses $\beta_j = 0.85$, $j = 1, \ldots, 8$, but
is otherwise the same as Simulation~1. Simulation~3 considers a higher-dimensional setting and larger
sample size, with $n=100$ and $p=40$.
The regression coefficients are $\beta_j = 2$ for $j = 1, \ldots, 10$ and $j = 21, \ldots, 30$, and $\beta_j = 0$
for all other $j$. The errors are generated with $\sigma = 15$, and the regressors are generated with 
$V_{ij} = 0.5$ for $i \neq j$ and $V_{ii} = 1$. Simulation~4 is the same as Simulation~3, but
$\beta_j = 3$ for $j = 1, \ldots, 15$, $\beta_j = 0$ for $j = 16, \ldots, 40$, and the regressors are
generated using a block-diagonal covariance matrix $V$ as follows. The first block ($1 \leq i,j \leq 5$)
has variances of $1.01$ and covariances of $1$, the second ($6 \leq i,j \leq 10$) and third 
($11 \leq i,j \leq 15$) are the same as the first, and the final block is the $25 \times 25$ identity
matrix. Covariances between all blocks are zero.

Fifty data sets $y$ and $X$ were generated for each simulation setting. No explicit ``intercept'' term was 
included in the generation of the response variables, but we follow the discussion in Section~\ref{sec:intro}
and construct posterior distributions based on the integrated likelihood (\ref{eq:intlik}) obtained by
marginalizing an intercept term from the regression model under a flat prior. Computationally, this means
using the mean-centered $y^*$ and $X^*$ data in all expressions reported in this paper.

We illustrate the performance of the MH and RS algorithms under the data augmentation representation
of the differentially-scaled prior for $\beta$ in (\ref{eq:correctDA1}) and (\ref{eq:correctDA2}). The prior
on $\sigma^2$ is the same in both cases, with $\nu_a = \nu_b = 1$, but we consider two different
priors for $\lambda_1$ and $\lambda_2$. The first prior has $L = \nu_1 = R = \nu_2 = 1$, which 
results in a uniform prior for $\lambda_1/(\lambda_1 + \lambda_2)$, the proportion of the total penalty
allocated to the $|\beta|_1$ term in the prior. We therefore call this prior the ``weak'' prior, as it does
not strongly favor either term in the penalty function. The second
prior has $L = 6$, $\nu_L = 4$, $R = 2$, and $\nu_R = 4$. This prior puts a $\mbox{Beta}(6,2)$ prior
on $\lambda_1/(\lambda_1 + \lambda_2)$ and represents a biasing of the prior in favor of stronger
$\ell_1$-norm penalization. We call this prior the ``strong'' prior.

The RS algorithm is implemented as described in Section~\ref{sec:sampdiff} under the data augmentation
representation. The MH algorithm updates $\log \sigma^2$, $\log \lambda_1$, and $\log \lambda_2$ 
using random-walk Metropolis updates as described in \citet{hans:11}.
Standard deviations for the random-walk innovations, $s_{\sigma^2}$, $s_{\lambda_1}$, and $s_{\lambda_2}$,
must be tuned and selected. Poor choices of these parameters can result in poorly-mixing chains. 
Experimentation with the simulated data settings revealed that $s_{\sigma^2} = 1$, $s_{\lambda_1}=1$, 
and $s_{\lambda_2} = 1$ resulted in generally good performance, and so we used these values in our
comparisons.

We compare the efficacy of the MH and RS MCMC algorithms with the exchange algorithm (EX) MCMC
sampler of \citet{wang:23}, which requires tuning only a single Metropolis--Hastings update as part of its sampling 
scheme. We note that the prior distributions used by \citet{wang:23} are different than those used
in this paper, and so the posterior distribution sampled by the EX algorithm is different than the posteriors
sampled by the RS and MH algorithms (which, for a given prior strength, are identical).
The prior for $\beta$ under the EX algorithm setup is parameterized using a differential scaling, and the penalty 
parameters are transformed via $(\lambda_1, \lambda_2) \rightarrow (\lambda = \lambda_1 + \sqrt{\lambda_2}, 
\alpha = \lambda_1/(\lambda_1 + \sqrt{\lambda_2}))$. The prior distributions for $\lambda$ and $\alpha$
do not contain any tunable parameters and so do not allow for the direct inclusion of prior information about the 
relative strengths of the two penalty perms. One step in the MCMC algorithm requires specification of $s_\alpha$, 
the standard deviation of a random-walk proposal for sampling the parameter $\log(\alpha/(1-\alpha))$. Poor
choices for this scale parameter can result in poorly-mixing Markov chains. We consider two values of
$s_\alpha$ in our simulations. The value $s_\alpha = 1$ was chosen because it resulted in generally
well-performing samplers across the four simulations; we refer to this algorithm as ``EX''. The value
$s_\alpha = 0.1$ was chosen because in general it resulted in poor-performing samplers; we refer to this
algorithm as ``EX-B''.

For each of the fifty simulated data sets in each of the five simulation setups, we compute the ESS 
for parameters $\beta_j$, $\sigma^2$, $\lambda_1$, and $\lambda_2$ based on 10,000
MCMC iterations of each algorithm after a burn-in of 100 iterations (with starting values chosen so that
convergence to stationarity should be very quick). We also compute the ESS for $\lambda$ and $\alpha$
(the transformed parameters under the EX algorithm) for completeness. We are comparing ESS across
six different sampler/posterior setups: the new rejection sampling methods under the weak and strong
prior setups (RS-W and RS-S, respectively), the random-walk Metropolis sampler under the weak and
strong prior setups (MH-W and MH-S, respectively), and the exchange algorithm sampler under the
good and poor step sizes (EX and EX-B, respectively).

To reduce nuisance variability when making comparisons, we treat the simulated data sets as a blocking
factor in the simulation experiment and compute, for each simulated data set, percentage improvement in 
ESS for MH-W, MH-S, RS-W, RS-S, and EX-B relative to the ESS for the EX algorithm,
$(\mbox{ESS}_{\mathrm{method}} - \mbox{ESS}_{\mathrm{EX}})/\mbox{ESS}_{\mathrm{EX}} \times 100$. 
Figure~\ref{fig:simbetas} displays these percentage improvements
(or reductions, if negative) for the $\beta_j$ across the five simulation studies. We discuss primarily
the comparison between the RS-* and EX-* methods because the MH-* methods often perform no
better on average than the RS-* methods and can require extensive step-size tuning to work well. Focusing first 
on Simulations~1 and 2 where $p = 8$, we see that the distribution of percentage improvement in ESS
for the RS-* methods is strongly right-skewed when the true $\beta_j \neq 0$. For some simulated data 
sets the RS-* methods perform worse than EX or EX-B, the reduction in ESS in those cases is small relative to the 
upside improvement when RS-* performs better. When the true $\beta_j = 0$, $j \in \{3, 4, 6, 7, 8\}$, the 
distribution of percent improvement for RSS-* is more symmetric
around zero, with the EX-* and RS-* methods performing similarly. This is quantified in Table~\ref{tab:avgimp},
where we see that the average improvement for RS-* is positive relative to EX for all regression coefficients
(with larger improvements within a simulation setting when $\beta_j \neq 0$).

In contrast to Simulations~1 and 2, the higher-dimensional ($p=40$) setting in Simulation~3 indicates
that the EX-* methods both tend to perform better than the RS-* methods. This simulation setting 
has two blocks of regressors with $\beta_j = 2$ ($j = 1, \ldots, 15$ and $j = 21, \ldots, 30$), and two
blocks of regressors with $\beta_j = 0$. All regressors are equally and moderately correlated via an
exchangeable covariance structure. We display ESS improvement for the first two $\beta_j$ in each of the four 
blocks. While the EX and EX-B methods perform similarly, the RS-* methods both perform worse
than they did in Simulations~1 and 2 relative to EX. While the percent improvement in ESS tends to be
negative for the RS-* methods in Simulation~3, we note that the reduction tends to be modest, with most
values falling between 0\% and -20\%. This is quantified in Table~\ref{tab:avgimp}, where we see that 
the average ESS percentage reduction tends to be modest (and maxes out at around 20\%) under both
priors across the simulated data sets.

Finally, Simulation~4 is another higher-dimensional setting ($p=40$) but where the regressors are
generated using a block-diagonal covariance matrix. ESS results are shown for the first two
$\beta_j$ in each of the four blocks; the true values are non-zero in the first three blocks and zero
in the last block. There is strong correlation among the regressors within blocks, but independence across 
blocks. We see that RS-S tends to performs as well or slightly better than EX for the nonzero coefficients,
with RS-W tending to perform slightly worse. All methods perform similarly when $\beta_j = 0$. The
average improvement results in Table~\ref{tab:avgimp} again indicate that when the RSS-* methods
perform worse, the reduction in ESS is not too large.

Figure~\ref{fig:simothers} displays the same information for the other parameters. The distributions of
the ESS improvements for the RS-* methods tend to be symmetric or right skewed across the simulations.
While in some cases the RS-* methods show a reduction in ESS relative to EX (e.g., $\sigma^2$ in 
Simulations~1 and 2), in others we see a much larger improvement (e.g., $\lambda_1$ for RS-S across
all simulations). The impact of the poorly-chosen step size for $\alpha$ in the exchange algorithm is
most apparent in this figure: the effective sample sizes for $\alpha$ tend to be much worse for EX-B
than for EX. In terms of average percent improvement, Table~\ref{tab:avgimp} shows that RS-S does
quite well for $\sigma^2$, $\lambda_1$, and $\lambda_2$ parameters in Simulations~1 and 2. Despite 
performing reasonably well for the the regression coefficients, $\beta_j$, RS-W does not do particularly 
well in any of the settings for $\sigma^2$, $\lambda_1$, and $\lambda_2$.

\begin{figure}[p]
 \begin{center}
 \includegraphics[scale=0.45]{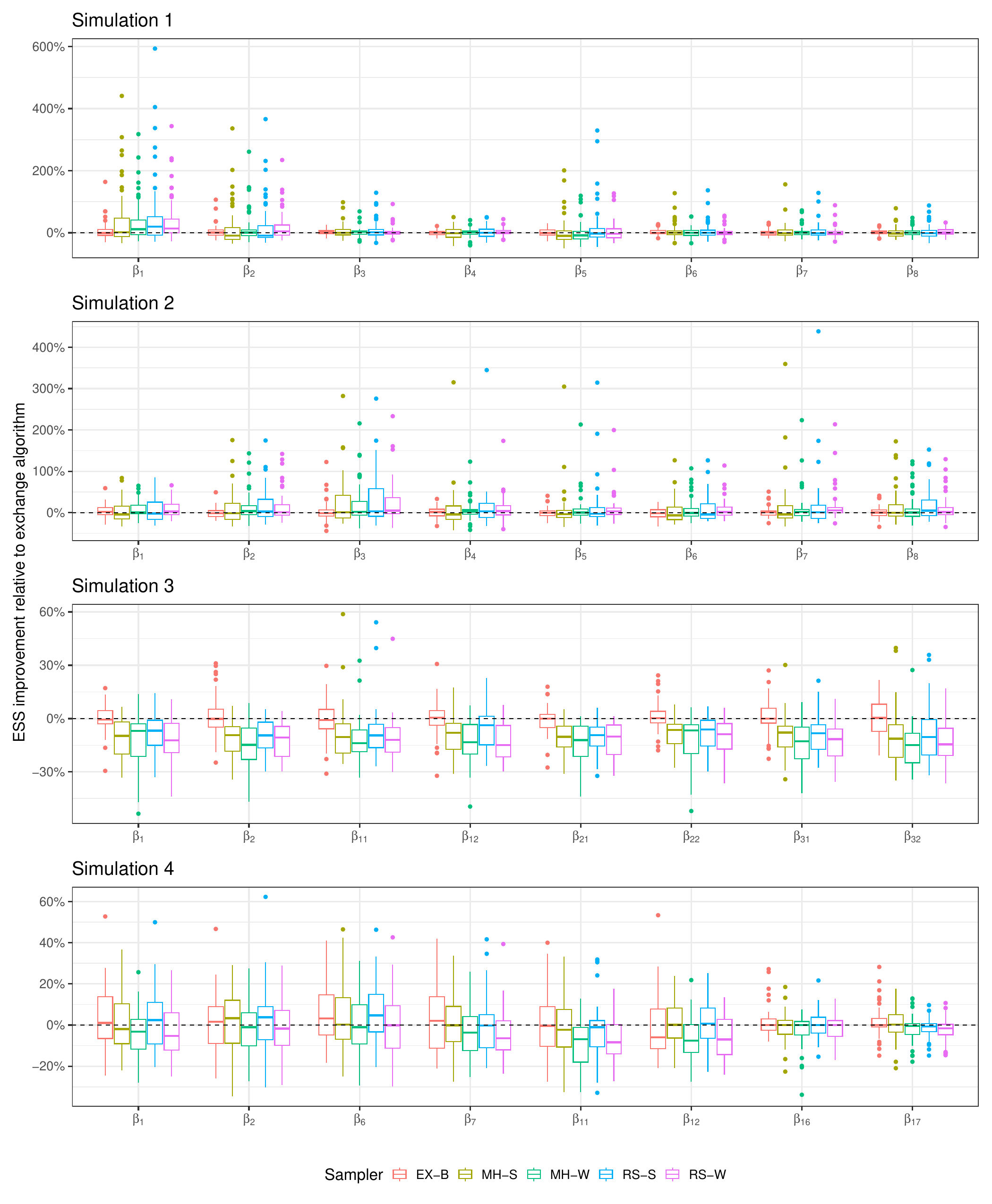}
 \caption{Percent improvement in ESS for fifty simulated data sets under four simulation settings for several 
 MCMC algorithms. The exchange algorithm (EX) with a well-chosen proposal standard deviation is the 
 baseline. ESSes for all eight $\beta_j$ are shown for Simulations~1 and 2; a selection of relevant $\beta_j$
 are shown for Simulations~3 and 4. EX-B: exchange algorithm with a poorly-chosen step size; MH-*:
 random-walk Metropolis updates for $\sigma^2$, $\lambda_1$, and $\lambda_2$ under Strong and
Weak priors; RS-*: the new rejection sampling methods under the two prior settings.
 	        \label{fig:simbetas}}
 \end{center}
\end{figure}

\begin{figure}[p]
 \begin{center}
 \includegraphics[scale=0.45]{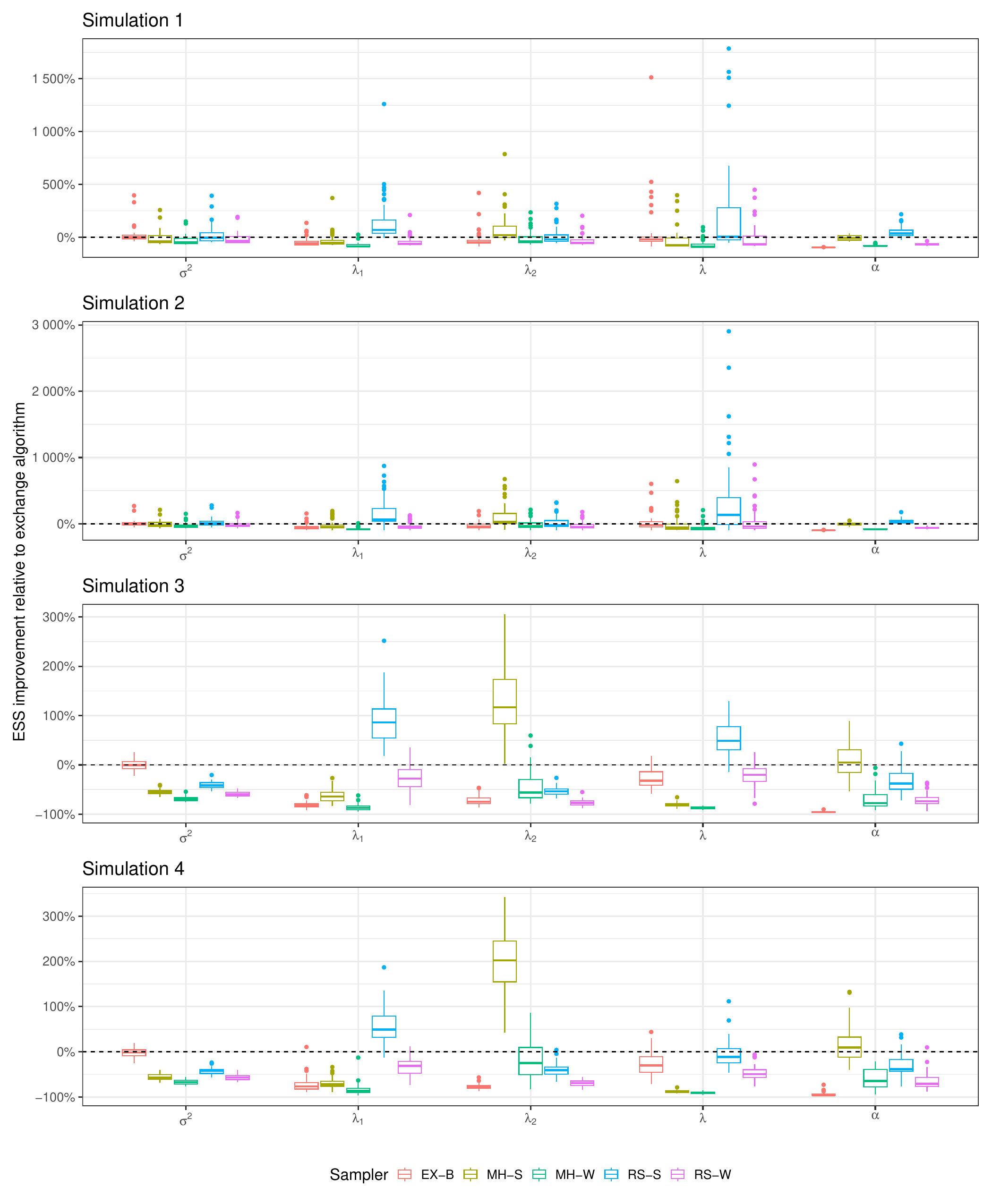}
 \caption{Percent improvement in ESS for fifty simulated data sets under four simulation settings for several 
 MCMC algorithms. The exchange algorithm (EX) with a well-chosen proposal standard deviation is the 
 baseline. The parameters $\lambda = \lambda_1 + \sqrt{\lambda_2}$ and $\alpha = \lambda_1 / 
 (\lambda_1 + \sqrt{\lambda_2})$ are a core part of the exchange algorithm sampler.
EX-B: exchange algorithm with a poorly-chosen step size; MH-*:
 random-walk Metropolis updates for $\sigma^2$, $\lambda_1$, and $\lambda_2$ under Strong and
Weak priors; RS-*: the new rejection sampling methods under the two prior settings.
 	        \label{fig:simothers}}
 \end{center}
\end{figure}

\begin{table}[!h]
\begin{center}
\begin{tabular}[h]{llrrrrrrrr} & Prior
 & $\beta_1$ & $\beta_2$ & $\beta_3$ & $\beta_4$ & $\beta_5$ & $\beta_6$ & $\beta_7$ & $\beta_8$ \\
\hline
\multirow{2}{*}{Simulation 1} & Weak & \textbf{42.85} & \textbf{19.98} & \textbf{3.19} & \textbf{2.24} & \textbf{5.03} & \textbf{1.58} & \textbf{1.57} & \textbf{2.79}\\
 & Strong & \textbf{59.73} & \textbf{22.63} & \textbf{8.64} & \textbf{2.69} & \textbf{18.06} & \textbf{7.26} & \textbf{4.92} & \textbf{2.48}\\
\hline
\multirow{2}{*}{Simulation 2} & Weak & \textbf{7.61} & \textbf{15.23} & \textbf{25.92} & \textbf{9.98} & \textbf{10.32} & \textbf{8.41} & \textbf{15.45} & \textbf{11.85}\\
 & Strong & \textbf{5.87} & \textbf{16.12} & \textbf{28.74} & \textbf{12.26} & \textbf{11.77} & \textbf{7.15} & \textbf{17.17} & \textbf{17.56}\\
\hline
\end{tabular}
\end{center}

\begin{center}
\begin{tabular}[t]{lrrrrrr}
& Prior & $\sigma^2$ & $\lambda_1$ & $\lambda_2$ & $\lambda$ & $\alpha$ \\
\hline
\multirow{2}{*}{Simulation 1} & Weak & -14.12 & -34.15 & -28.88 & \textbf{36.99} & -60.1\\
 & Strong & \textbf{21.79} & \textbf{149.86} & \textbf{11.9} & \textbf{213.22} & \textbf{49}\\
\cline{1-7}
\multirow{2}{*}{Simulation 2} & Weak & -21.40 & -43.86 & -39.82 & \textbf{2.62} & -65.76\\
 & Strong & \textbf{19.83} & \textbf{166.75} & \textbf{18.39} & \textbf{353.03} & \textbf{38.27}\\
\cline{1-7}
\multirow{2}{*}{Simulation 3} & Weak & -58.98 & -25.86 & -76.51 & -21.78 & -71.12\\
 & Strong & -40.95 & \textbf{90.42} & -53.17 & \textbf{53.84} & -31.35\\
\cline{1-7}
\multirow{2}{*}{Simulation 4} & Weak & -56.86 & -33.72 & -69.57 & -47.39 & -65.9\\
 & Strong & -42.93 & \textbf{58.47} & -39.56 & -4.85 & -31.16\\
\hline
\end{tabular}
\end{center}
\caption{Average percent improvement in ESS across fifty simulated data sets for the MCMC algorithm using rejection sampling 
(RS) versus the exchange algorithm (EX) with a well-chosen proposal standard deviation. Positive numbers in bold indicate better
average performance for RS, e.g.,~$36.77$ indicates the ESS for RS was $36.77\%$ larger than it was for EX. 
\label{tab:avgimp}}
\end{table}

\newcolumntype{R}{>{\raggedleft\arraybackslash}X}
\newcolumntype{C}{>{\centering\arraybackslash}X}

\begin{table}
\begin{center}
\begin{tabularx}{\textwidth}{lr*{4}{R}}  
\multicolumn{2}{c}{} & \multicolumn{4}{c}{Percentage of the forty $\beta_j$ with average \% improvement in ESS:} \\
 &  & $\geq 0\%$ & $\geq -5\%$ & $\geq -10\%$ & $\geq -20\%$ \\ 
\hline
 \multirow{2}{*}{Simulation 3} & Weak Prior & 0 & 0 & 15 & 100 \\
 & Strong Prior & 0 & 2.5 & 97.5 & 100 \\
 \hline
 \multirow{2}{*}{Simulation 4} & Weak Prior & 10 & 92.5 & 100 & 100 \\
 & Strong Prior & 37.5 & 100 & 100 & 100 \\
 \hline
\end{tabularx}
\end{center}
\caption{Percentage of the number of $\beta_j$ (out of 40) for which the average percent improvement (or reduction,
when negative) in ESS for RS-* relative to EX is greater than or equal to various thresholds,
	\label{tab:40betas}}
\end{table}

\section{Discussion}\label{sec:disc}
Full Bayesian inference for the Bayesian elastic net regression model is challenging due to the $\Phi(\cdot)$
term in the normalizing constant for the prior on $\beta$ that is a function of $\lambda_1$, $\lambda_2$, and
sometimes $\sigma^2$ (depending on the parameterization of the model). All existing, correctly-specified 
methods for posterior sampling use at least one Metropolis--Hastings update that requires specification and
tuning of a proposal distribution. We have introduced transformations for the commonly- and differentially-scaled
priors (under both direct and DA representations) that result in ``well known'' full conditional distributions that can be 
easily sampled from for all but one parameter, $\theta$. Careful analysis of the full conditional for $\theta$ reveals
that rejection sampling approaches that take advantage of log-concavity of the target density function can be
used to efficiently produce samples directly from the full conditional. A key to this approach is that the rejection
sampling methods are automatic in the sense that no tuning is required. Access to MCMC algorithms that 
practitioners can run directly without having to interactively manipulate will make these statistical models
more broadly impactful.

We compared our new sampling methods with the existing, exchange algorithm-based approach of
\citet{wang:23}. The exchange algorithm sampler is cleverly designed to avoid computation of $\Phi(\cdot)$
while at the same time minimizing the number of tuning parameters, and it worked quite well across the simulation 
settings we considered. Despite its strong performance, there are several reasons why one might prefer the new 
rejection sampling approaches. First, the posterior distribution in \citet{wang:23} from which the
exchange algorithm is sampling has no user-specified parameters in the prior. While this might be useful from
a particular objective Bayes point of view where a goal is to provide users with methods that can be used reliably
without much user input, having the ability to shift the prior toward stronger $\ell_1$- or $\ell_2$-norm penalization
gives the user the ability to move beyond default prior settings to build a model that is a better match for their
analysis. Second, while the exchange algorithm was not too sensitive to the choice of the random-walk scale
parameter, $s_\alpha$, particularly poor choices reduced the ESSes for some of the parameters. Avoiding the
need to specify such a parameter is desirable.

We also note that the higher effective sample sizes for the parameters $\sigma^2$, $\lambda$, and $\alpha$
under the exchange algorithm-based MCMC sampler might be due to other sampling techniques used by
\citet{wang:23} to improve the Markov chain dynamics. \citet{wang:23} use a partially-collapsed 
\citep{liu:94, vand:08} update for $\sigma^2$ and implement a generalized Gibbs step \citep{liu:00, liu:04}
for $\sigma^2$, $\lambda$, and $\alpha$ that is based on a scale transformation group to accelerate 
convergence of the chain. Incorporating similar techniques into the rejection samplers introduced in this
paper may produce similar improvements and is an area of future work. 

Finally, we note that the performance of MCMC algorithms can be very sensitive to parameterization
The (re)parameterization introduced in this paper was chosen primarily to produce full conditional
distributions with log-concave densities that could be sampled easily via automatic rejection 
sampling techniques. Other transformations might be available that yield similar computational
simplicity while also further improving the Markov chain dynamics and will be the subject of future
research.


\appendix

\section{Rejection sampling $\sigma^2$ with an inverse gamma proposal}\label{app:sig2rejsamp}
\citet{li:10} propose using rejection sampling to sample from the full conditional 
for $\sigma^2$ in their formulation of the Bayesian elastic net. 
They define the function $f(\sigma^2) \propto \pi(\sigma^2 \mid y, \beta, \tau^2, \lambda_1, \lambda_2)$ 
\citep[see equation~(6) in][]{li:10} to be
\begin{equation}
	f(\sigma^2) = (\sigma^2)^{-a-1} \left\{\Gamma_U\left(\frac{1}{2},
		\frac{\lambda_1^2}{8\sigma^2\lambda_2}\right)\right\}^{-p}
		e^{-b/\sigma^2}, \label{eq:litarget}
\end{equation}
where $a = n/2+p$ and $b > 0$ depends on $y$ and the other parameters in the model. We note that the
full conditional in \citet{li:10} is similar to but slightly different than the full conditional in (\ref{eq:dacssig2}) due 
to small differences in model specification and parameterization.

\citet{li:10} propose
rejection sampling from $f(\sigma)$ using an inverse gamma proposal distribution with density function
$h(\sigma^2) = b^a \Gamma(a)^{-1}(\sigma^2)^{-a-1}e^{-b/\sigma^2}$, where $a$ and $b$ are the same as 
in (\ref{eq:litarget}). \citet{li:10} describe
the following algorithm for rejection sampling using this proposal distribution: ``generate a candidate $Z$ from $h$
and a $u$ from uniform(0,1), and then accept $Z$ if $u \leq \Gamma\left(\frac{1}{2}\right)^p b^a f(Z)/\Gamma(a) h(Z)$
or, equivalently, if $\log(u) \leq p \log \left( \Gamma\left(\frac{1}{2}\right)\right) - p \log \left( \Gamma_U\left(\frac{1}{2},
\frac{\lambda_1^2}{8Z\lambda_2}\right)\right)$.'' By the definition of the upper incomplete gamma function, we have
\begin{eqnarray}
	\Gamma_U\left(\frac{1}{2}, \frac{\lambda_1^2}{8Z \lambda_2}\right) &=& 
		\int_{\lambda_1^2/(8Z\lambda_2)}^\infty t^{-1/2} e^{-t}dt \nonumber \\
		&\leq& \int_0^\infty t^{-1/2} e^{-t}dt \label{eq:reversal} \\
		&=& \Gamma\left(\frac{1}{2}\right), \nonumber
\end{eqnarray}
which means that $p \log \left( \Gamma\left(\frac{1}{2}\right)\right) - p \log \left( \Gamma_U\left(\frac{1}{2},
\frac{\lambda_1^2}{8Z\lambda_2}\right)\right) \geq 0$ with probability one. We also have that $\log(u) \leq 0$ with
probability one, and so the rejection sampling algorithm will accept every proposal with probability one, which
can only be true if $f(\sigma^2) \propto h(\sigma^2)$, which is not the case, and the algorithm does not produce
samples from $f(\sigma^2)$. The contradiction appears to arise from a reversal of the bound in (\ref{eq:reversal})
in one step of the construction of the algorithm in \citet{li:10}.
In fact, an inverse gamma distribution with parameters $a$ and $b$ that match those in (\ref{eq:litarget}) cannot
be used as a proposal distribution for rejection sampling. The ratio of the target to the proposal densities is
$f(\sigma^2)/h(\sigma^2) = b^{-a} \Gamma(a) \left\{\Gamma_U\left(\frac{1}{2},\frac{\lambda_1^2}{8\sigma^2\lambda}
\right)\right\}^{-p}$. We see that $\lim_{\sigma^2 \rightarrow 0} f(\sigma^2)/h(\sigma^2)= \infty$ and rejection sampling cannot
be implemented using this proposal distribution.


\bibliographystyle{./natbib}
\bibliography{./ensampling}

\vspace{1in}





\end{document}